\newcolumntype{P}[1]{>{\centering\arraybackslash}p{#1}}
\newcolumntype{M}[1]{>{\centering\arraybackslash}m{#1}}
\newtheorem{theorem}{Theorem}[section]
\begin{document}

\sloppy

\title{TAD-SIE: Sample Size Estimation for Clinical Randomized Controlled Trials using a Trend-Adaptive Design with a Synthetic-Intervention-Based Estimator}

\author{\name Sayeri Lala \email slala@princeton.edu \\
		\addr Department of Electrical and Computer Engineering, Princeton University \\
		Princeton, NJ 08544 USA \\
       \name Niraj K. Jha \email jha@princeton.edu \\
       \addr Department of Electrical and Computer Engineering, Princeton University \\
       Princeton, NJ 08544 USA  
       }
       


\maketitle

\begin{abstract}
Phase-3 clinical trials provide the highest level of evidence on drug safety and effectiveness needed for market
approval by implementing large randomized controlled trials (RCTs). However, 30-40\% of these trials fail mainly because such studies have inadequate sample sizes, stemming from the inability to obtain accurate initial estimates of average treatment effect parameters. To remove this obstacle from the drug development cycle, we present a new algorithm called Trend-Adaptive Design with a Synthetic-Intervention-Based Estimator (TAD-SIE) that appropriately powers a parallel-group trial, a standard RCT design, by leveraging a state-of-the-art hypothesis testing strategy and a novel trend-adaptive design (TAD). Specifically, TAD-SIE uses SECRETS (Subject-Efficient Clinical Randomized Controlled Trials using Synthetic Intervention) for hypothesis testing, which simulates a cross-over trial in order to boost power; doing so, makes it easier for a trial to reach target power within trial constraints (e.g., sample size limits). To estimate sample sizes, TAD-SIE implements a new TAD tailored to SECRETS given that SECRETS violates assumptions under standard TADs. In addition, our TAD overcomes the ineffectiveness of standard TADs by allowing sample sizes to be increased across iterations without any condition while controlling significance level with futility stopping. Our TAD also introduces a hyperparameter that 
enables trial designers to trade off the final sample size with the number of iterations taken by the algorithm in order 
to implement a sample-efficient or time-efficient solution under given resource constraints. On a real-world 
Phase-3 clinical RCT (i.e., a two-arm parallel-group superiority trial with an equal number of subjects per arm), 
TAD-SIE reaches typical target operating points of 80\% or 90\% power and 5\% significance level in contrast to 
baseline algorithms that only get at best 59\% power and 4\% significance level. For a target power of 80\% and 
target significance level of 5\%, the sample-efficient mode of TAD-SIE requires a median of 300-400 subjects and a 
median of 4-5 iterations while the time-efficient mode of TAD-SIE requires a median of 500-700 subjects and a median 
of 2 iterations, demonstrating the advantages of each mode (similar advantages were observed for a target power of 
90\%). While TAD-SIE can only be used for trials with rapidly measurable primary outcomes (i.e., the outcome defining 
the average treatment effect) due to its sequential nature, we plan to address this limitation in future work by 
developing a scheme that fits an estimator that predicts the primary outcome from different but correlated and rapidly 
measurable outcomes in order to avoid directly measuring the primary outcome. 
\end{abstract}

\providecommand{\keywords}[1]
{
  \small	
  \textbf{\textit{Keywords---}} #1
}

\keywords{Adaptive Design, Clinical Randomized Controlled Trials, Counterfactual Estimation, Crossover Design, Sample Size Estimation, Synthetic Intervention.}

\section{Introduction}

Clinical trials are performed in phases in order to characterize safety and effectiveness of new drugs prior to market 
approval. In particular, while Phase-1 and Phase-2 trials gather initial evidence on drug effects using small sample 
sizes, typically less than 100 subjects per arm, in order to identify tolerable and potentially effective dosages, 
Phase-3 trials establish safety and effectiveness over candidate dosages by using Randomized Controlled Trials
(RCTs) with large sample sizes, typically hundreds to thousands of subjects 
\cite{friedman2015fundamentals,stanley2007design}. Despite Phase-3 trials accounting for 60\% of R\&D investment 
for clinical trials (approximately $\$500$ million USD per drug in Year 2019), 30-40\% of these trials fail to 
proceed to market approval \cite{wong2019estimation,harrer2019artificial}, primarily because they fail to establish 
effectiveness \cite{hwang2016failure}; in other words, these studies lack appropriate sample sizes to reach target 
power and target significance level \cite{friedman2015fundamentals}. 

To ensure that any Phase-3 trial succeeds and thereby make the drug development process more efficient, we present a 
new solution called Trend-Adaptive Design with a Synthetic-Intervention-Based Estimator (TAD-SIE) that implements 
a parallel-group RCT, a standard design \cite{friedman2015fundamentals}, of adequate sample size in order to reach 
target power and target significance level. To do this, TAD-SIE adopts a state-of-the-art hypothesis testing scheme 
called SECRETS (Subject-Efficient Clinical Randomized Controlled Trials using Synthetic Intervention) 
\cite{lala2023secrets}, which simulates a cross-over trial \cite{friedman2015fundamentals} in order to increase 
power over standard hypothesis testing; using SECRETS increases the chance of reaching target power within a 
trial's sample size constraints, especially under noisy estimates of the average treatment effect (ATE) 
\cite{friedman2015fundamentals,yao2021survey} and variance of the primary outcome, i.e., the outcome defining the 
ATE. Then, to estimate the sample size required for SECRETS, TAD-SIE implements a novel trend-adaptive design (TAD) 
that gradually increases the sample size while refining estimates of the ATE and outcome variance under SECRETS 
based on accumulated RCT data in order to converge to a sample size that yields the target power; meanwhile, the 
algorithm controls significance level by incorporating futility stopping. Given its iterative nature, TAD-SIE can 
reach the target operating point by trading off final sample size with number of iterations; therefore, we present 
sample-efficient and time-efficient modes for TAD-SIE that can be chosen based on trial constraints.    
 
Specifically, our contributions are as follows:
\begin{enumerate}
	\item We present TAD-SIE, a novel framework that solves the problem of sample size estimation for clinical 
parallel-group RCTs by using a new trend-adaptive algorithm tailored to a powerful hypothesis testing scheme, SECRETS.   
	\item We validate TAD-SIE on a real-world Phase-3 clinical RCT under a standard setup (i.e., a two-arm 
superiority trial with an equal number of subjects per arm \cite{friedman2015fundamentals}), showing that TAD-SIE 
reaches typical target operating points (80\% or 90\% power and 5\% significance level), unlike standard baseline 
algorithms. In addition, we show that the sample-efficient and time-efficient modes of TAD-SIE can effectively 
trade off final sample size with number of iterations, thereby making TAD-SIE conducive to resource constraints. 
	\item We validate the premises underlying TAD-SIE with ablation studies. 
\end{enumerate}
  
The rest of the article is organized as follows. We provide background on prior approaches and topics relevant to our 
framework in Section \ref{sec:background} and then present the framework in Section \ref{sec:method}. We explain how 
we evaluate performance in Section \ref{sec:eval}. We present our results in Section \ref{sec:results}, discuss their 
implications in Section \ref{sec:discuss}, and draw conclusions in Section \ref{sec:con}.          
  
\section{Background} \label{sec:background}

In this section, we provide background on concepts relevant to understanding TAD-SIE, which include methods for 
initial sample size estimation, adaptive designs that can adjust sample size estimates, and a powerful hypothesis 
testing algorithm, called SECRETS.

\subsection{Initial Sample Size Estimation}

Trial designers calculate the sample size required for an RCT by using the hypothesis testing framework 
\cite{rosner2015fundamentals}. For example, for a two-arm, parallel group, superiority trial with an equal number 
of participants per arm, a common type of RCT design \cite{friedman2015fundamentals}, the sample size formula is 
determined by a two-sample, two-sided hypothesis test over the value of the ATE $\delta_{true}$, where the null 
hypothesis ($H_{0}$) posits that $\delta_{true}=0$  and the alternative hypothesis ($H_{1}$) posits 
$\delta_{true}=\delta$, where $\delta>0$. Under this setup, the sample size required to obtain a power of 
$1-\beta_{target}$ (\%) and significance level of $\alpha_{target}$ (\%) is given by 
Eq.~(\ref{eq:sample_size_two_sample}), where $n_{a}$ is the arm size, $\sigma_{control}^2$ and $\sigma_{treat}^2$ are the variances of the primary outcome under the control and treatment groups, respectively, and $z_{1-\alpha_{target}/2}$ and $z_{1-\beta_{target}}$ correspond to the $(1-\alpha_{target}/2)$th and $(1-\beta_{target})$th percentiles of a standard normal distribution.

\begin{equation}
    \label{eq:sample_size_two_sample}
    n_{a}=\frac{(\sigma_{control}^{2}+\sigma_{treat}^{2})(z_{1-\alpha/2}+z_{1-\beta})^{2}}{\delta^{2}}
\end{equation}

Trial designers set the target power and target significance level, with power typically above 80\% and
significance level below 5\%, and can also set $\delta$ or the standardized treatment effect (i.e., the ratio of
ATE to standard deviation) to the smallest clinically meaningful value that is of interest to detect in the
absence of any estimates of these parameters \cite{hulley2013designing}. Otherwise, $\delta$,
$\sigma_{control}^2$, and $\sigma_{treat}^2$ can be estimated from prior studies (e.g., RCTs or external pilot studies) if available or internal pilot studies conducted by trial designers \cite{friedman2015fundamentals,hulley2013designing}; however, calculations based on these data sources, especially pilot studies, tend to overestimate treatment effects and underestimate variances since the data fail to capture the distribution of the ATE over the target population, thereby resulting in underpowered studies \cite{friedman2015fundamentals,teare2014sample}. 

\subsection{Adaptive Designs}

Adaptive trial designs have been proposed as a way to calculate appropriate sample sizes, given noisy initial sample 
size estimates. Specifically, adaptive designs adjust initial estimates based on interim analyses conducted over the course of the trial (up till reaching the initial planned sample size), where at each interim analysis, the outcome data from participants entered thus far into the trial are reviewed. Standard approaches for doing so include Group Sequential Designs (GSD), Stochastic Curtailment, and TADs.  

A GSD can decrease the initial planned sample size through repeated testing, i.e., testing at each interim analysis. 
Specifically, a GSD sequentially acquires RCT data from groups of subjects at each interim analysis and performs 
statistical testing at each analysis based on all the RCT data acquired thus far. In order to control 
the significance level, the test boundary used at each analysis, especially any preceding the final one, needs
to be increased relative to the test boundary used by a nonadaptive trial; for example, for a two-sided test
with 5\% significance level, a GSD with two planned analyses would use a critical value of 2.78 at the first
analysis and 1.97 at the second analysis under an O'Brien-Fleming model \cite{gsd_boundary}, compared to a
conventional value of 1.96 used by a nonadaptive design \cite{friedman2015fundamentals}. Consequently, a GSD
yields marginal reductions in sample sizes when the standardized treatment effect is larger than that used for
planning \cite{mehta2011adaptive}. For GSDs to be useful, a trial designer still needs to have a good prior over
the range of the standardized treatment effect, otherwise risks underpowering or overpowering the study \cite{mehta2011adaptive}, where the latter consequence is not ideal since it consumes substantially more resources than necessary.    

Stochastic Curtailment is another approach that can decrease the initial planned sample size by terminating trials 
that appear futile, that is, likely to fail in reaching statistical significance 
\cite{friedman2015fundamentals,snapinn2006assessment}. It determines futility based on Conditional Power (CP),
which extrapolates power at the final sample size (i.e., the initial estimate of the sample size) conditioned on
the value of the interim test statistic. For example, for a one-sided hypothesis test with significance level
$\alpha$, CP is determined by Eq.~(\ref{eq:cp_formula}) \cite{lan1988b}, where $t$ is the information fraction,
defined as the ratio of the sample size at interim analysis to the planned sample size, $z$ is the test
statistic calculated from interim data, $Z$ is the random variable corresponding to the test statistic at
the final sample size, $z_\alpha$ is the test boundary given by the $\alpha$th percentile under the standard
normal distribution, $\delta$ is the ATE to test against under $H_1$, and $\Phi$ is the cumulative distribution
function for the standard normal distribution. If CP at any interim analysis lies below some pre-specified
futility threshold, the trial is terminated. While Stochastic Curtailment can be used to control the
significance level and reduce resources expended by terminating early \cite{snapinn2006assessment}, it reduces
power \cite{snapinn2006assessment} and therefore cannot be used to appropriately power studies.    

\begin{equation} 
\label{eq:cp_formula}
CP(t,z)= \text{Pr} (Z > z_{\alpha}|t,z,\delta)=1-\Phi\biggr[\frac{z_{\alpha} - z\sqrt{t}-\delta(1-t)}{\sqrt{1-t}}\biggr]
\end{equation}

Instead of decreasing the initial planned sample size, TADs can increase it based on trends observed from
interim data. Among trend-adaptive algorithms, those based on CP have been recommended since they can control
significance level without making statistical adjustments to the test statistic and test critical value
\cite{friedman2015fundamentals,mehta2011adaptive,chen2004increasing}. They do this by only permitting sample
size increases when the trend in the data appears ``promising," a condition determined by CP at interim
analysis. In particular, CP is evaluated based on the data at information fraction $t$, i.e.,
$\delta=\hat{\delta}=z/\sqrt{t}$, which simplifies Eq.~(\ref{eq:cp_formula}) to Eq.~(\ref{eq:cp_formula_trend})
\cite{lan1988b}, where the second line follows from the relationship that $1-\Phi(x)=\Phi(-x)$. If CP is
sufficiently high, i.e., 50\% \cite{chen2004increasing}, or lies within a promising range, where the lowerbound
is determined by the sample size at interim analysis, the initial planned sample size, the maximum sample size,
and target power, and the upperbound is given by target power \cite{mehta2011adaptive}, the final sample size
can be increased according to the sample size formula based on revised parameter estimates obtained from interim
data or according to CP by setting CP to the target power and solving it for sample size, given the revised
parameter estimates. In practice, such TADs have marginal impact on increasing power since the probability of satisfying the CP criterion at an interim analysis remains low \cite{mehta2011adaptive}.

\begin{align} \label{eq:cp_formula_trend}
CP(t,z) = \text{Pr} (Z > z_{\alpha}|t,z,\hat{\delta}) &=1-\Phi\biggr[ \frac{z_{\alpha}}{\sqrt{1-t}} - \frac{z}{\sqrt{t(1-t)}} \biggr] \\
&= \Phi\biggr[\frac{z}{\sqrt{t(1-t)}} - \frac{z_{\alpha}}{\sqrt{1-t}} \biggr]  \nonumber
\end{align}

\subsection{SECRETS} 

SECRETS \cite{lala2023secrets}, illustrated in Fig.~\ref{fig:secrets}, is a hypothesis testing algorithm that increases power of an already-conducted parallel-group RCT by simulating the cross-over design \cite{friedman2015fundamentals} with a state-of-the-art counterfactual estimation algorithm called Synthetic Intervention (SI) \cite{agarwal2023synthetic}. We present the framework for a two-arm superiority trial with an equal number of participants per arm since this is a common setup \cite{friedman2015fundamentals}. 

\begin{figure}[hbt!]
    \centering
\makebox[\textwidth][c]{\includegraphics[scale=0.6]{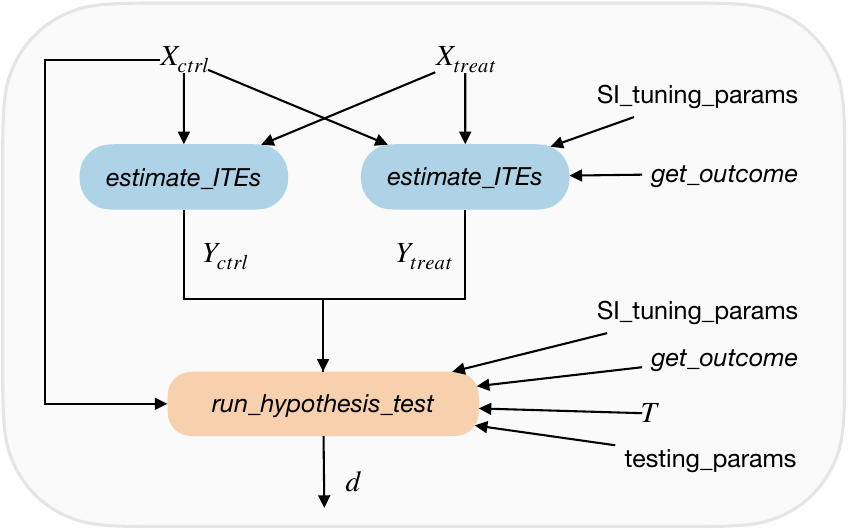}}
    \caption{Flowchart of SECRETS. Note that both calls to \emph{estimate\_ITEs} take in
SI\_tuning\_params and \emph{get\_outcome}, but we have omitted the arguments for brevity.}
    \label{fig:secrets}
\end{figure}

First, it estimates individual treatment effects (ITEs) across all patients in the control and treatment groups,
given by $Y_{ctrl}$ and $Y_{treat}$, respectively, using the observed patient data, $X_{ctrl}$ and $X_{treat}$, and procedure \emph{estimate\_{ITEs}}, which takes in hyperparameters for SI, contained in the SI\_tuning\_params
dictionary, and function \emph{get\_outcome}, which calculates the primary outcome of the trial given a patient's response trajectory under some intervention. Then, it performs hypothesis testing using the merged set of ITEs with \emph{run\_hypothesis\_test}, which implements a data-driven hypothesis test procedure given that the ITEs calculated under \emph{estimate\_{ITEs}} do not satisfy an assumption under standard hypothesis testing. To implement the new test procedure, \emph{run\_hypothesis\_test} uses hyperparameters contained in SI\_tuning\_params and testing\_params, hyperparameter $T$, and function \emph{get\_outcome}. The procedures \emph{estimate\_{ITEs}} and \emph{run\_hypothesis\_test} are described in more detail next.
 
 \begin{figure*}[hbt!]
\makebox[\textwidth][c]{\includegraphics[scale=0.6]{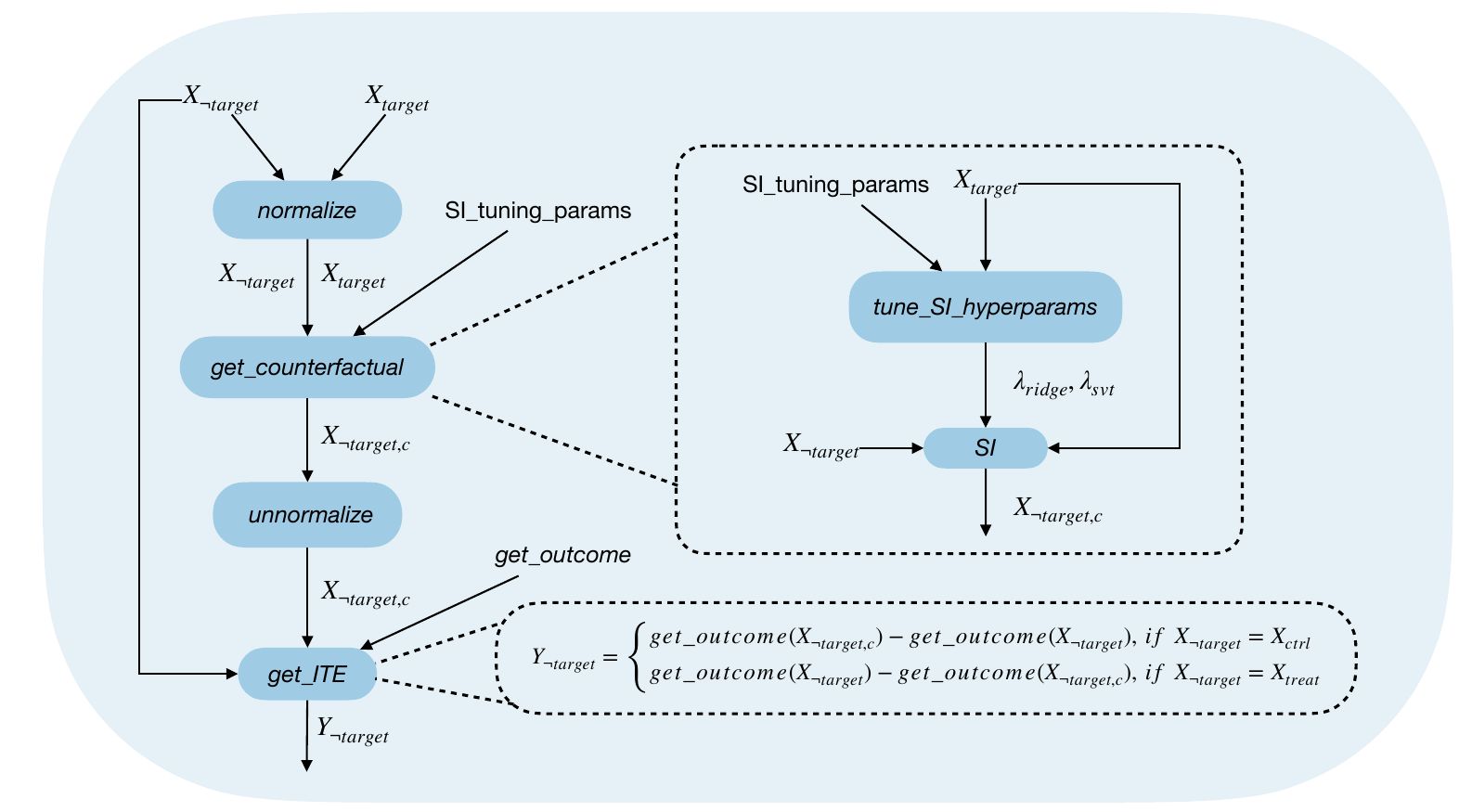}}
    \caption{Flowchart of \emph{estimate\_ITEs}.}
    \label{fig:est_ites}
\end{figure*}

The flowchart of \emph{estimate\_{ITEs}} is shown in Fig.~\ref{fig:est_ites}. To estimate an ITE per
patient, \emph{estimate\_{ITEs}} calculates the counterfactual outcome under the target intervention to
which the patient was not exposed and then takes the difference between the counterfactual outcome under the
target intervention and the observed outcome under the intervention to which the patient was exposed. To do
this, \emph{estimate\_{ITEs}} takes in $X_{\neg target}$, the data of the group unexposed to the target
intervention, and $X_{target}$, the data of the group exposed to the target intervention, normalizes the data,
and calls \emph{get\_counterfactual} to calculate $X_{\neg target,c}$, the counterfactual outcome under the
target intervention per subject from the unexposed group. \emph{get\_counterfactual} uses SI to do this,
which calculates the counterfactual outcome under the target intervention for some target unit using observed
outcomes from a set of donor units exposed to the target intervention. Specifically,
\emph{get\_counterfactual} tunes SI's regularization hyperparameters with \emph{tune\_SI\_hyperparams},
which takes in $X_{target}$ and SI\_tuning\_params (this contains $r_{train,val}$, a hyperparameter for the
ratio of the training to validation split), and runs SI with the tuned hyperparameters and $X_{target}$ as the
donor data and each unit from $X_{\neg target}$ as target unit data. After unnormalizing the observed and
counterfactual outcomes, \emph{estimate\_{ITEs}} calls \emph{get\_ITE} to calculate the ITE using
$X_{\neg target}$, $X_{\neg target,c}$, and \emph{get\_outcome}. \emph{estimate\_{ITEs}} is used to
calculate $Y_{ctrl}$ by setting $X_{\neg target}$ to $X_{ctrl}$ and $X_{target}$ to $X_{treat}$ and 
\textit{vice versa} to calculate $Y_{treat}$.

\begin{figure*}[hbt!]
\makebox[\textwidth][c]{\includegraphics[scale=0.6]{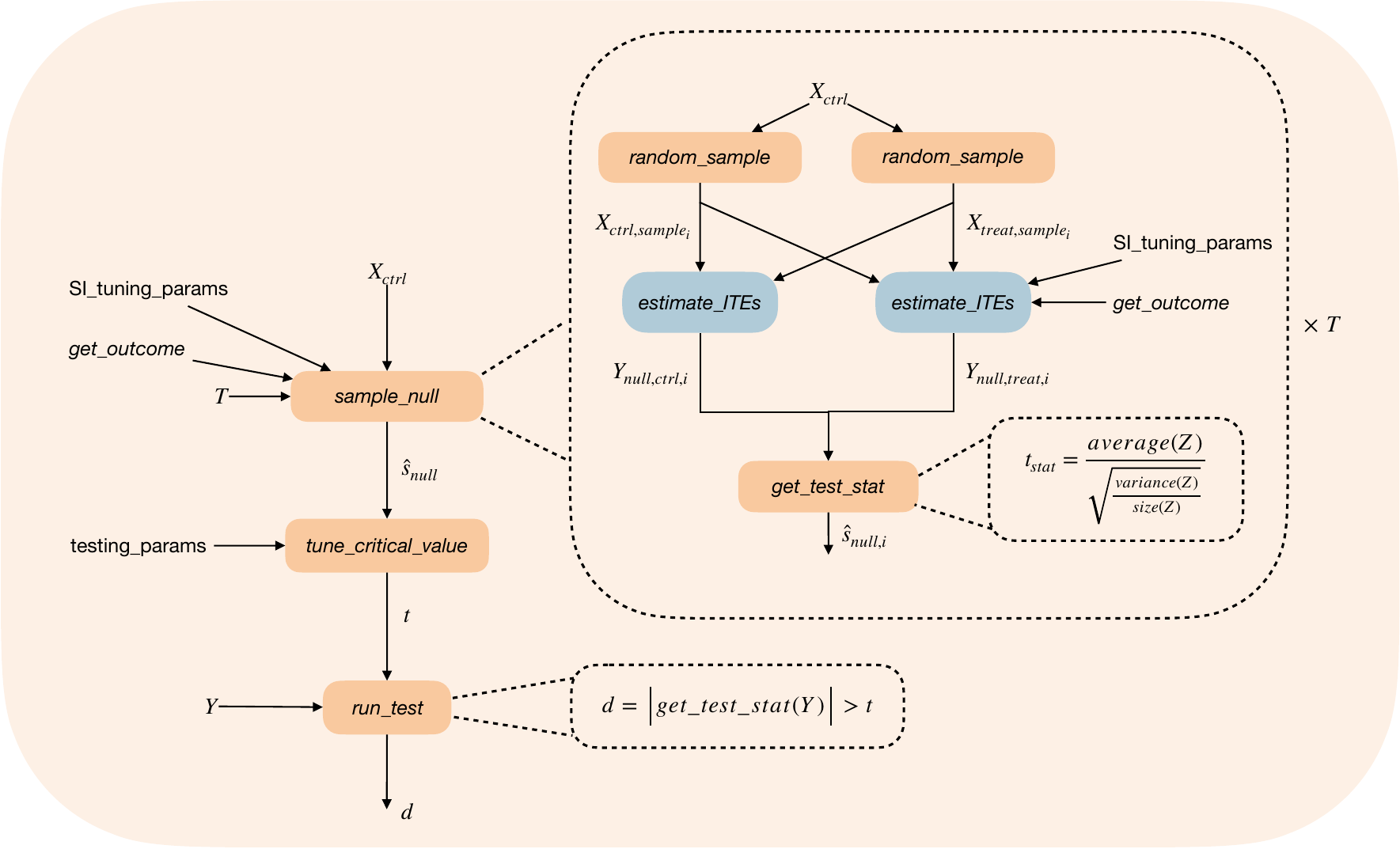}}
    \caption{Flowchart of \emph{run\_hypothesis\_test}.}
    \label{fig:run_hyp}
\end{figure*} 

By using the same donor data to estimate the ITE per patient in the same treatment arm, SI induces dependencies
among the ITEs, thereby violating the independence assumption under standard hypothesis testing. Therefore,
\emph{run\_hypothesis\_test} implements a new test, which uses the same test statistic used by the
one-sample \emph{t}-test but tunes the critical value of the test to control the significance level. The
flowchart of \emph{run\_hypothesis\_test} is shown in Fig.~\ref{fig:run_hyp}. First, it calls
\emph{sample\_null}, which estimates the variance of the null distribution of the test statistic using a
novel bootstrapping algorithm. The algorithm samples from the null distribution $T$ times by drawing bootstrap
samples from the original control arm data, $X_{ctrl}$, to construct comparable control and treatment groups,
$X_{ctrl,sample_i}$ and $X_{treat,sample_i}$, calculating their corresponding ITEs using
\emph{estimate\_ITEs} (with SI\_tuning\_params and \emph{get\_outcome}), and using the ITEs to calculate the
test statistic with \emph{get\_test\_stat}. Given samples of the test statistic from the null distribution
$\hat{s}_{null}$, \emph{run\_hypothesis\_test} calls \emph{tune\_critical\_value}, which tunes the
testing critical value to control the significance level using an algorithm similar to binary search, guided by a set of parameters defined under testing\_params, which include $\alpha_{target}$, the target significance level, $t_{lower}$, the lowerbound on the critical value search range, $t_{upper}$, the upperbound on the critical value search range, $t_{limit,exp}$, the amount to shrink or expand the critical value search range, $n_{s}$, the number of candidate critical values over which to search, and $\delta_{\alpha_{target}}$, the error tolerance for the significance level. \emph{run\_hypothesis\_test} then calls \emph{run\_test} to perform the test with the tuned critical value $t$ and the ITEs over the control and treatment groups $Y$, and returns the test decision $d$.

\section{Methodology} \label{sec:method}

In this section, we present TAD-SIE, a novel framework that efficiently identifies the sample size required to
reach a target operating point under a parallel-group trial in the absense of any prior knowledge of the ATE and 
variance, using a novel trend-adaptive design tailored to SECRETS. We present it for a two-arm superiority trial with 
an equal number of participants per arm, given that this is a simple but common setup for 
RCTs \cite{friedman2015fundamentals}. The 
flowchart of TAD-SIE is illustrated in Fig.~\ref{fig:framework}, which we walk through next.

TAD-SIE first calls \emph{conduct\_pilot\_study}, which runs an internal pilot study to obtain initial
estimates for these parameters under SECRETS. Specifically, it takes in as input $n_{pilot}$, the arm size for
the control/treament group under the pilot study, SI\_tuning\_params, a dictionary containing hyperparameters
for SI, and \emph{get\_outcome}, a function calculating the outcome metric from a patient's response trajectory,
both of which are used for estimating ITEs, as done in SECRETS, and $B$, a hyperparameter used for estimating the variance of the ITEs. \emph{conduct\_pilot\_study} outputs the initial RCT data for the control and treatment groups, $X_{ctrl,0}$ and $X_{treat,0}$, respectively, the arm size of the initial control and treatment groups, $n_{curr,0}$, along with initial estimates of the ATE and variance of the ITE, $\delta_0$ and $\sigma^2_0$, respectively. 

TAD-SIE then calls \emph{run\_sample\_size\_search} to iteratively refine the estimates to converge to an
accurate estimate of the target sample size needed for target power while stopping for futility to control the
significance level. To do this, it uses the RCT data collected from the pilot study, the initial estimates of
the ATE and variance, along with hyperparameters, which include $\alpha_{target}$, the target significance
level, $1-\beta_{target}$, the target power, $n_{max}$, the maximum arm size allowed by the trial,
\emph{step\_size\_scale\_factor}, the rate at which the sample size is increased, and
\emph{futility\_power\_boundary}, the threshold that determines if futility holds.  \emph{run\_sample\_size\_search} outputs the complete RCT data collected over the trial, $X_{ctrl,final}$ and $X_{treat,final}$, and \emph{futility\_flag}, which indicates whether the trial is futile. 

If the trial is futile, TAD-SIE fails to reject the null hypothesis and accepts it by convention  \cite{rosner2015fundamentals}. Otherwise, TAD-SIE runs SECRETS to perform hypothesis testing and return the test outcome $d$ given the complete RCT data, using the hyperparameters given by SI\_tuning\_params, testing\_params, and $T$, and function \emph{get\_outcome}. 

Next, we describe each step of TAD-SIE in more detail. 

\begin{figure}[hbt!]
\makebox[\textwidth][c]{\includegraphics[width=0.8\textwidth,angle=-90]{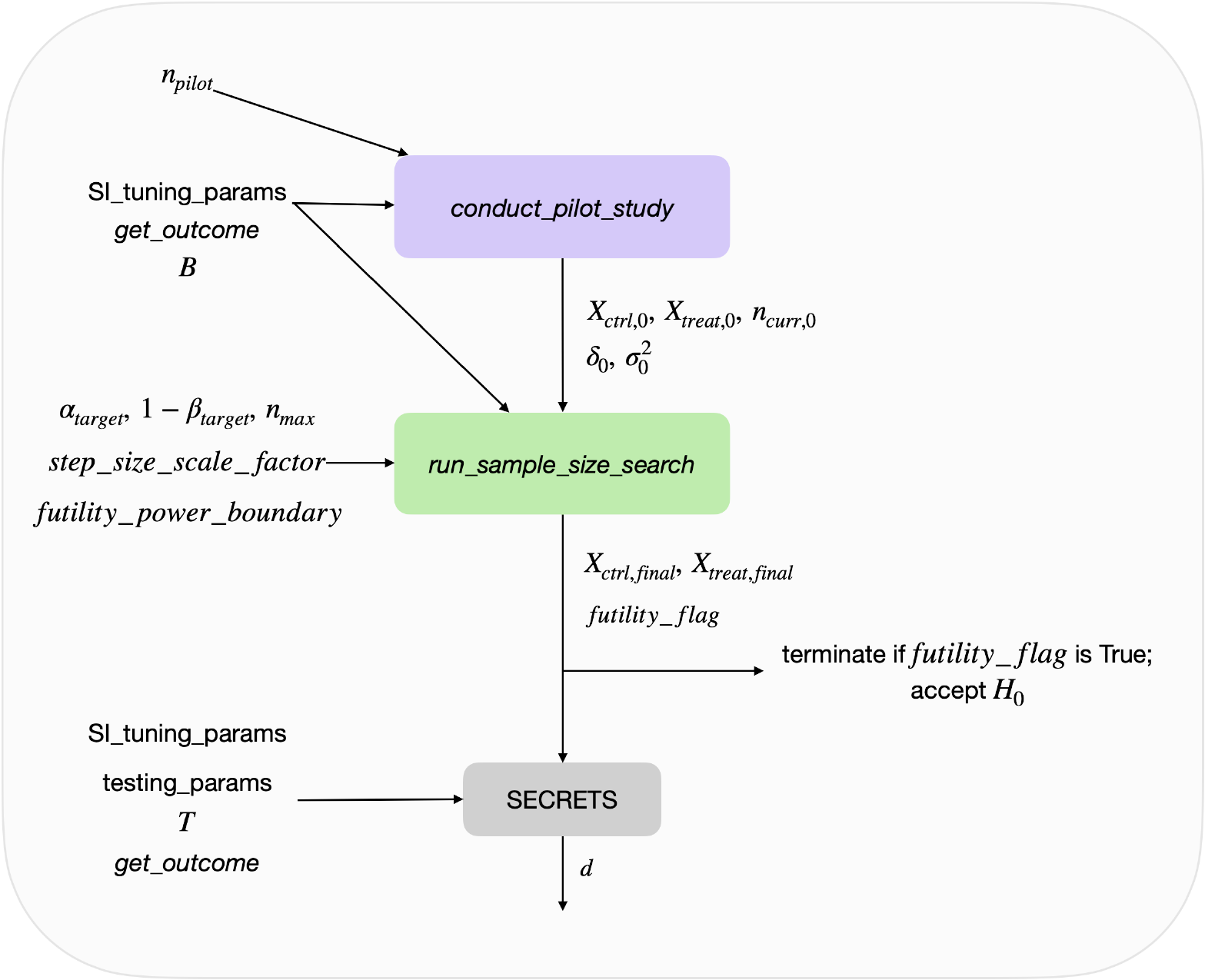}}
\caption{Flowchart of TAD-SIE.}
\label{fig:framework}
\end{figure}

\subsection{Initializing Search with a Pilot Study}

\emph{conduct\_pilot\_study} provides initial estimates on parameters defining the standard sample size
formula for a one-sample two-sided test (the testing scheme used by SECRETS), which is given by 
Eq.~(\ref{eq:sample_size}) \cite{rosner2015fundamentals}. Specifically, it estimates the moments, i.e., the ATE,
$\delta$, and variance of the ITE, $\sigma^2$, under SECRETS. The flowchart of \emph{conduct\_pilot\_study} is illustrated in Fig.~\ref{fig:pilot_study}, which we describe next.

\begin{equation}
    \label{eq:sample_size}
    n=\frac{\sigma^{2}(z_{1-\alpha/2}+z_{1-\beta})^{2}}{\delta^{2}}
\end{equation}

\begin{figure}[hbt!]
\makebox[\textwidth][c]{\includegraphics[width=0.7\textwidth,angle=-90]{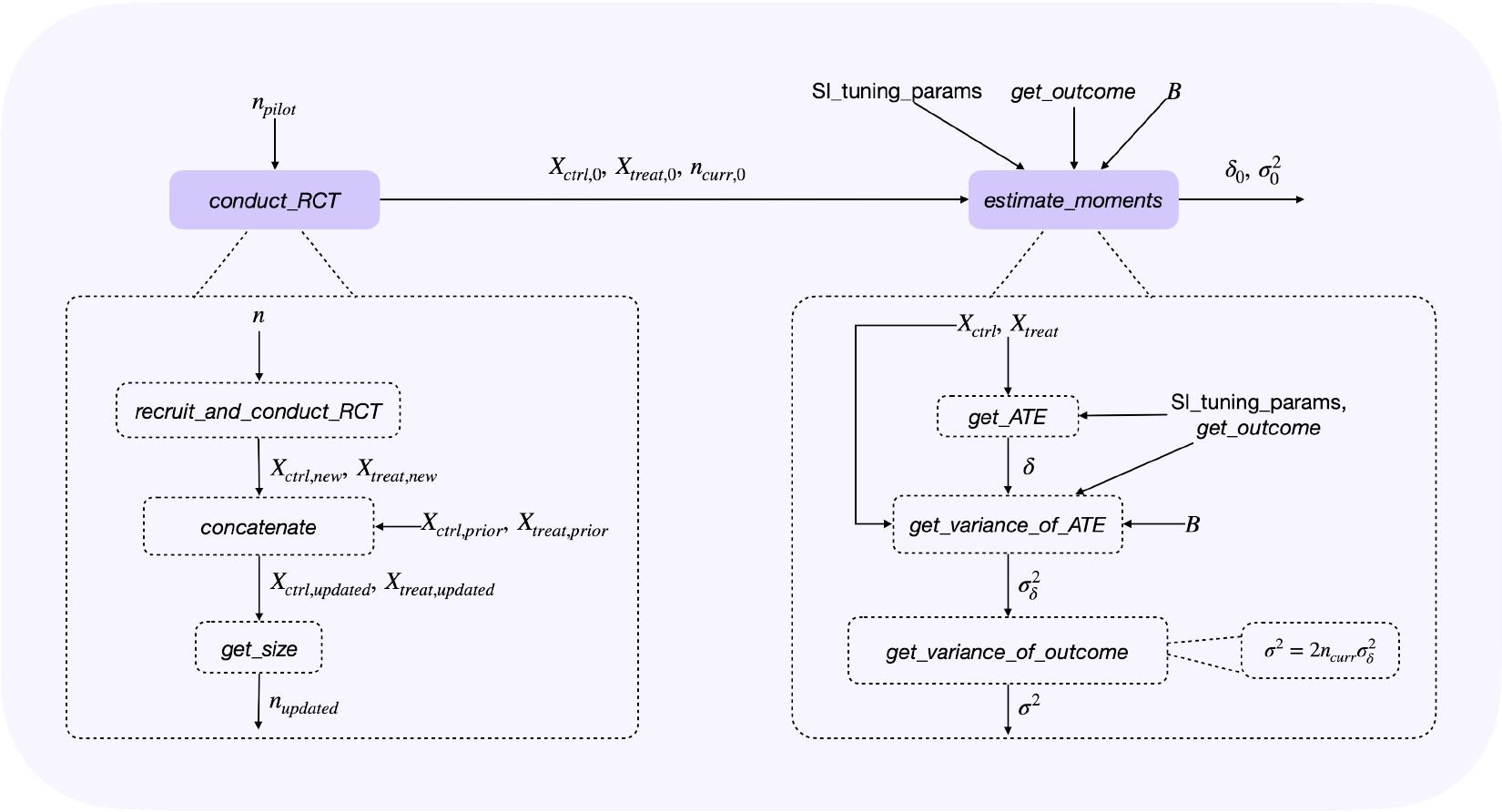}}
\caption{Flowchart of \emph{conduct\_pilot\_study}.}
\label{fig:pilot_study}
\end{figure}

To estimate $\delta$ and $\sigma^2$, \emph{conduct\_pilot\_study} runs an internal pilot study by collecting RCT data over a small arm size of $n_{pilot}$ with \emph{conduct\_RCT}. \emph{conduct\_RCT} calls \emph{recruit\_and\_conduct\_RCT} to run an RCT at the given arm size by recruiting subjects, randomizing them over control and treatment groups, and monitoring their responses up through the RCT endpoint. It then calls \emph{concatenate} to merge the newly collected RCT data over the control and treat arms, $X_{ctrl,new}$ and $X_{treat,new}$, respectively, with any prior collected data, $X_{ctrl,prior}$ and $X_{treat,prior}$ (there is no prior data for the pilot study). It also calculates the current arm size of the RCT with \emph{get\_size} and returns the updated data and arm size.  

Afterwards, \emph{conduct\_pilot\_study} estimates the moments under SECRETS from the RCT data collected
from the pilot study, $X_{ctrl,0}$ and $X_{treat,0}$, using \emph{estimate\_moments}, which  uses
SI\_tuning\_params, \emph{get\_outcome}, and hyperparameter $B$. \emph{estimate\_moments} first estimates
$\delta_0$ by calling \emph{get\_ATE}, illustrated in Fig.~\ref{fig:get_ate}. \emph{get\_ATE} first
calculates the ITEs, $Y_{ctrl}$ and $Y_{treat}$, for the control and treatment groups, respectively, using
\emph{estimate\_{ITEs}} with the collected RCT data, $X_{ctrl}$ and $X_{treat}$, hyperparameters defined by SI\_tuning\_params, and function \emph{get\_outcome}, and then averages the ITEs with \emph{get\_mean}.  

\begin{figure}[hbt!]
\makebox[\textwidth][c]{\includegraphics[width=0.35\textwidth,angle=-90]{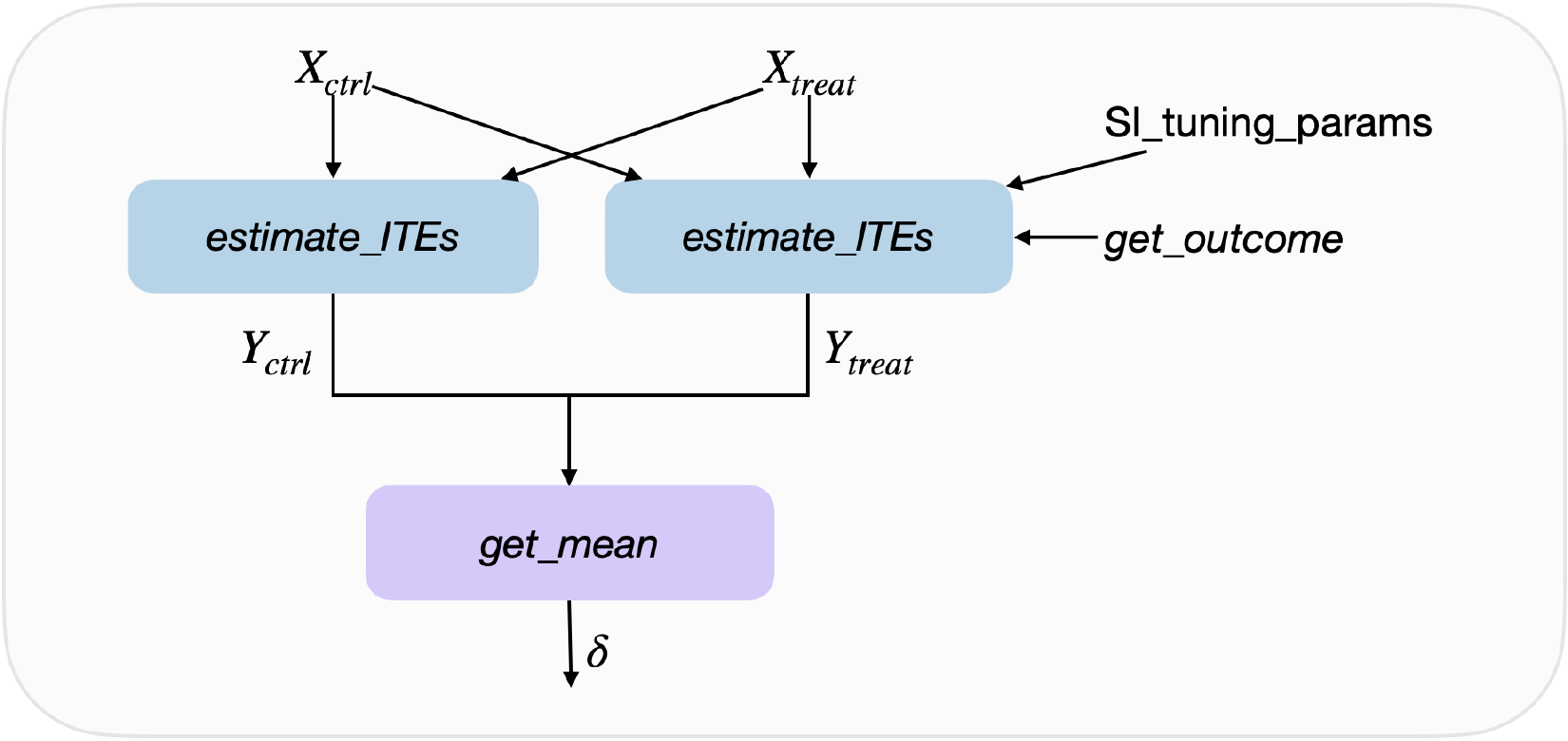}}
\caption{Flowchart of \emph{get\_ATE}.}
\label{fig:get_ate}
\end{figure}


Then, \emph{estimate\_moments} calculates $\sigma_0^2$ but uses a new procedure to do so because the sample
size formula given by Eq.~(\ref{eq:sample_size}) requires that the ITEs be independently and identically distributed (i.i.d.) according to a Gaussian, although the independence assumption is violated under SECRETS because SI induces dependencies among the ITEs. Given that the distribution of the ATE under SECRETS is approximated by a normal distribution, per the theorem on Dependency Neighborhoods based on Stein's method \cite{ross2011fundamentals}, \emph{estimate\_moments} first estimates $\sigma_{\delta}^{2}$, the variance of the ATE, and uses it to estimate $\sigma^{2}$, the variance of a set of hypothetical ITEs satisfying the i.i.d. assumption that would also yield the observed distribution of the ATE. 

To estimate $\sigma_{\delta}^{2}$, \emph{estimate\_moments} calls \emph{get\_variance\_of\_ATE},
illustrated in Fig.~\ref{fig:get_var_ate}. \emph{get\_variance\_of\_ATE} estimates the variance of the ATE
in a fashion similar to the \emph{sample\_null} procedure used in SECRETS for estimating the variance of the
ATE under the null distribution. Specifically, \emph{get\_variance\_of\_ATE} first samples from the
distribution of the ATE, where at each sampling iteration $b$, \emph{get\_variance\_of\_ATE} draws one
bootstrap sample from the RCT data over the control arm, $X_{ctrl,b}$, and one bootstrap sample from the RCT
data over the treatment arm, $X_{treat,b}$; this is different from \emph{sample\_null}, which samples from
the control arm only to produce samples from the null distribution. Afterwards,
\emph{get\_variance\_of\_ATE} calls \emph{get\_ATE} to calculate $\hat{\delta}_{b}$, the ATE associated
with the bootstrap samples. After sampling $B$ times from the distribution of the ATE, the variance of the ATE,
i.e., $\sigma_{\delta}^{2}$, is given by the variance over the $B$ samples. Afterwards,
\emph{estimate\_moments} calls \emph{get\_variance\_of\_outcome}, which calculates $\sigma^2$ according
to the relationship between variance of a mean and variance of underlying i.i.d. samples, given by 
Eq.~(\ref{eq:var_mean_sample}), where $\bar{X}$ is the mean over a set of $n$ i.i.d samples, $\{X_1,X_2,...,X_n\}$. Specifically, $\sigma^2$ is calculated by multiplying $\sigma_{\delta}^{2}$ by $2n_{curr}$ since the ITEs are pooled over the two arms, where $n_{curr}$ is the arm size of the control/treatment group.


\begin{figure}[hbt!]
\makebox[\textwidth][c]{\includegraphics[width=0.55\textwidth,angle=-90]{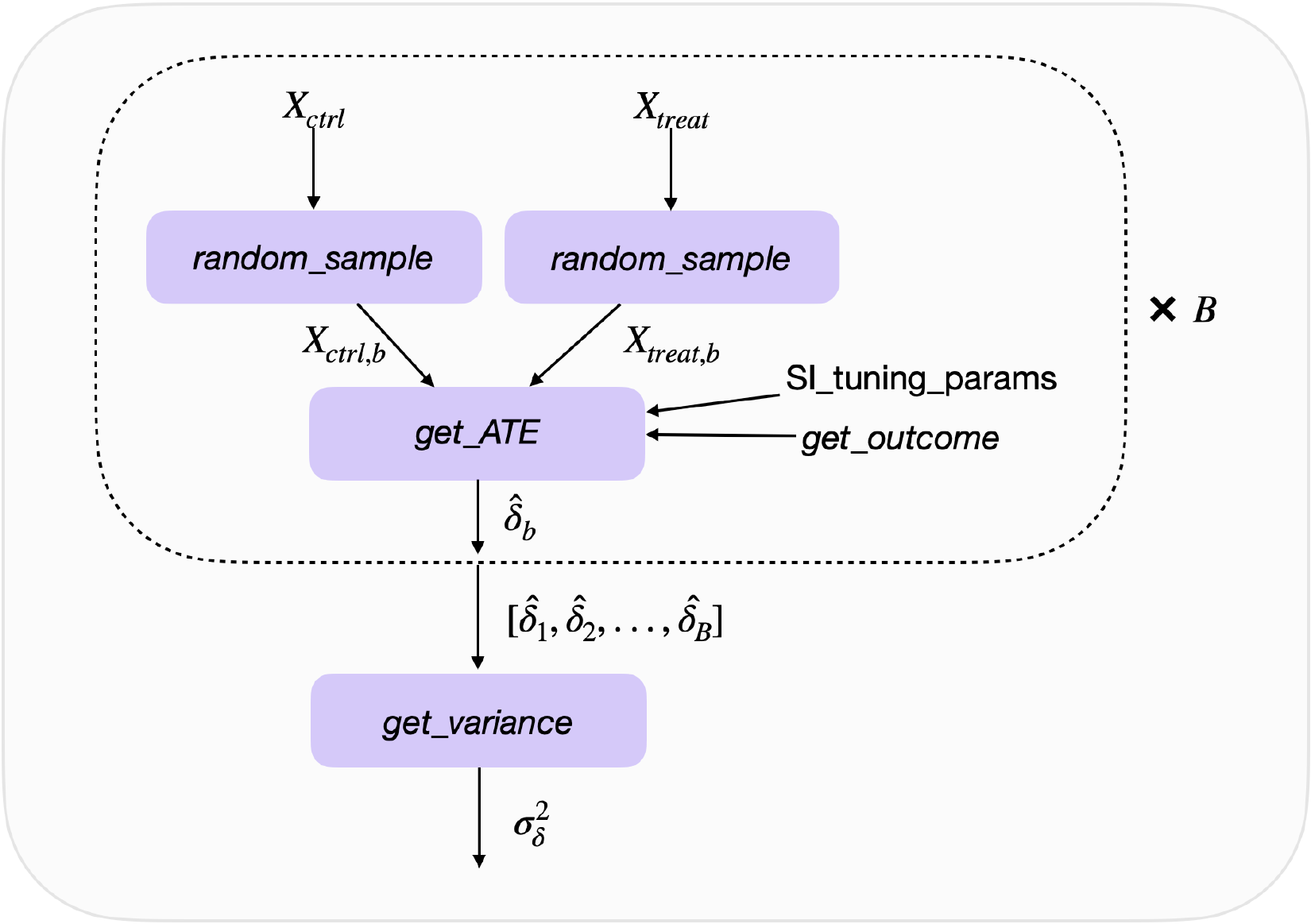}}
\caption{Flowchart of \emph{get\_variance\_of\_ATE}.}
\label{fig:get_var_ate}
\end{figure}

 \begin{align}
    \label{eq:var_mean_sample}
    \sigma_{\bar{X}}^{2} &\coloneqq \text{Var}(\bar{X}) \\
    &\coloneqq \text{Var}\biggr(\frac{\Sigma_{i=1}^{n} X_{i}}{n}\biggr) \nonumber \\
    &= \frac{\text{Var}(X_i)}{n} \nonumber \\ 
    &\coloneqq \frac{\sigma^2}{n} \nonumber
\end{align}

\clearpage

\subsection{Sample Size Estimation using a Trend-Adaptive Design}

The sample size formula works well when $\delta$ and $\sigma^2$ are estimated correctly. Hence, evaluations
based on estimates of $\delta$ and $\sigma^2$ from small samples (e.g., as with the pilot study) are expected to
extrapolate poorly. In addition, under TAD-SIE's procedure for estimating moments, the formula's extrapolation
capacity further diminishes because the variance can scale with sample size, as per Theorem \ref{thm:var_scale} (proof in \hyperref[sec:appendix]{Appendix}).

\begin{theorem} \label{thm:var_scale}

The variance calculated by estimate\_moments can scale with the sample size $n$, i.e., $\sigma^{2}=O(n)$. 

\end{theorem}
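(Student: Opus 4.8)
The plan is to unwind what \emph{estimate\_moments} actually returns and then control its growth rate. Recall that the procedure never estimates $\sigma^2$ directly: \emph{get\_variance\_of\_ATE} first produces the bootstrap estimate $\hat\sigma_\delta^2$ of $\text{Var}(\hat\delta)$, the variance of the SECRETS ATE, and then \emph{get\_variance\_of\_outcome} sets $\sigma^2 := 2 n_{curr}\,\hat\sigma_\delta^2$, which is exactly Eq.~(\ref{eq:var_mean_sample}) solved for $\sigma^2$ with the pooled ITE count $n = 2 n_{curr}$. Hence the entire question reduces to the order of growth of $\hat\sigma_\delta^2$ in $n$: in the i.i.d.\ world of Eq.~(\ref{eq:var_mean_sample}) this decays like $1/n$ and the two factors of $n$ cancel, leaving $\sigma^2$ constant; the point of the theorem is that SI makes the ITEs dependent, so this cancellation need not happen and the leftover factor of $n$ survives.

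First I would write $\hat\delta$ as the mean of the $2 n_{curr}$ merged ITEs and expand the variance of a sum,
\[
\text{Var}(\hat\delta) \;=\; \frac{1}{(2 n_{curr})^{2}}\Bigl[\sum_{i}\text{Var}(\text{ITE}_i) \;+\; \sum_{i\neq j}\text{Cov}(\text{ITE}_i,\text{ITE}_j)\Bigr].
\]
Under the mild assumption that the primary outcome, and therefore each ITE (a difference of an observed and a counterfactual outcome), has uniformly bounded variance, say $\text{Var}(\text{ITE}_i)\le M^{2}$, Cauchy--Schwarz gives $|\text{Cov}(\text{ITE}_i,\text{ITE}_j)|\le M^{2}$, so the bracket is at most $(2 n_{curr})^{2} M^{2}$ and $\text{Var}(\hat\delta)\le M^{2}$, a constant independent of $n$. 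The same bound applies to the bootstrap surrogate $\hat\sigma_\delta^2$ computed by \emph{get\_variance\_of\_ATE}, since the resampled ITEs lie in the same range. Substituting back, $\sigma^{2} = 2 n_{curr}\,\hat\sigma_\delta^2 \le 2 n_{curr} M^{2} = O(n)$, which is the stated bound.

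To justify ``can'' — that this $O(n)$ is attained rather than a loose ceiling — I would exhibit a degenerate regime compatible with the SI/bootstrap construction in which the ITEs within an arm move together, for instance when the donor pool forces SI to assign essentially the same counterfactual correction to every target unit, or when the outcomes share a common latent factor that does not average out; then $\sum_{i\neq j}\text{Cov}(\text{ITE}_i,\text{ITE}_j) = \Theta(n^{2})$, so $\text{Var}(\hat\delta)=\Theta(1)$ and $\sigma^{2}=\Theta(n)$. The main obstacle I anticipate is precisely this covariance step in both directions: the clean upper bound needs a uniform second-moment assumption on the primary outcome to kill the off-diagonal terms, and the tightness claim needs a concrete dependency pattern that the SI estimator and the bootstrap can genuinely produce rather than an arbitrarily posited correlation structure; the remainder is just bookkeeping around the definition of \emph{estimate\_moments} and the elementary variance-of-a-sum identity.
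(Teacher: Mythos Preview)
Your proposal is correct and follows essentially the same route as the paper: expand $\text{Var}(\hat\delta)$ as a double sum of covariances over the dependent ITEs, observe that the off-diagonal terms need not vanish so the variance of the mean is $O(1)$ rather than $\Theta(1/n)$, and then multiply by $n$ via \emph{get\_variance\_of\_outcome} to obtain $\sigma^{2}=O(n)$. Your version is in fact a bit tighter than the paper's, since you make the bounded-second-moment assumption explicit and invoke Cauchy--Schwarz to control each cross term (the paper merely counts nonvanishing summands), and you address the ``can'' direction with a concrete dependence regime, which the paper leaves implicit.
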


Therefore, TAD-SIE implements a trend-adaptive algorithm tailored to SECRETS, called \emph{run\_sample\_size\_search}, which iteratively accrues the RCT data, initialized from the pilot study, till a sufficient number of samples has been collected to reach target power (subject to resource constraints) or the trial terminates for futility. The 
flowchart of \emph{run\_sample\_size\_search} is illustrated in Fig.~\ref{fig:run_sample_size_search}, which we go through next.

\begin{figure}[hbt!]
\makebox[\textwidth][c]{\includegraphics[width=0.6\textwidth,angle=-90]{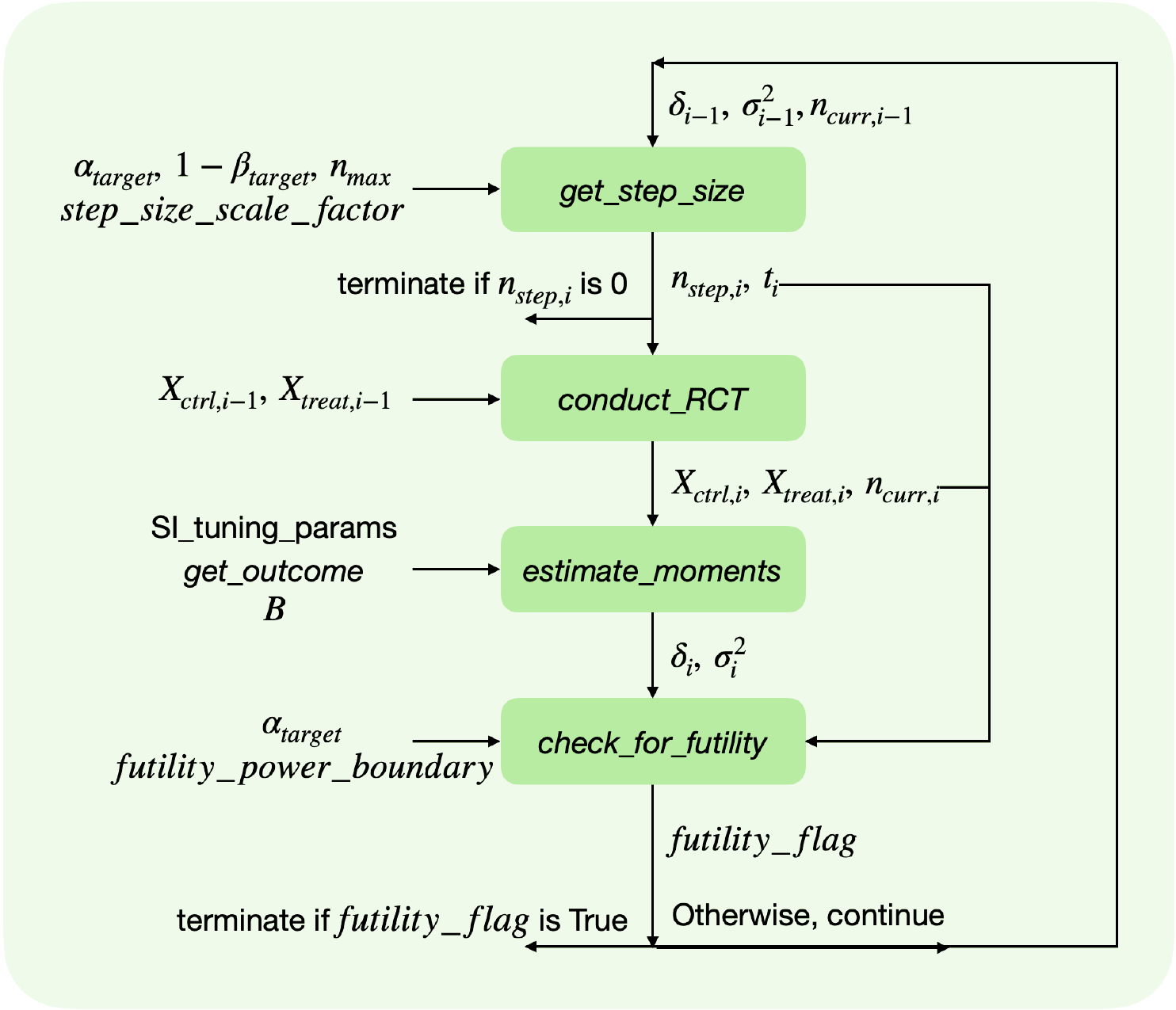}}
\caption{Flowchart of \emph{run\_sample\_size\_search}.}
\label{fig:run_sample_size_search}
\end{figure}   

At iteration $i$, \emph{run\_sample\_size\_search} first increases the current sample size according to
\emph{get\_step\_size}, which determines the step size $n_{step,i}$ and information fraction $t_i$ based on
quantities calculated from the prior iteration (pilot study if $i=0$), i.e., the moments $\delta_{i-1}$ and
$\sigma_{i-1}^2$ and arm size $n_{curr,i-1}$, and hyperparameters set by the trial designer, i.e.,
$\alpha_{target}$, $1-\beta_{target}$, $n_{max}$, and $step\_size\_scale\_factor$. The pseudocode for
\emph{get\_step\_size} is given in Alg.~\ref{alg:gss}. Specifically, \emph{get\_step\_size} first
calculates $n_{target}$, an estimate of the arm size needed for target performance, using the sample size
formula (Eq.~(\ref{eq:sample_size})) evaluated with target significance level $\alpha$, target power $1-\beta$,
and moments $\delta$ and $\sigma^{2}$; it divides by two since the total sample size is split between the
two arms (line \ref{agss:get_target}). Then, \emph{get\_step\_size} calculates the step size $n_{step}$ by
calculating the difference between the target arm size $n_{target}$ and current arm size $n_{curr}$ and scaling
it according to $step\_size\_scale\_factor$, which is less than 1 to guard against overpowering the study (line
\ref{agss:get_nstep_start}). The step size is constrained so that it is nonnegative (decreases are not allowed
since the data have already been collected) and does not exceed the maximum allowable step size set by the
maximum arm size $n_{max}$ (line \ref{agss:get_nstep_bounded}). In addition, \emph{get\_step\_size}
estimates the information fraction $t$ resulting from the sample size increase, which is used for futility stopping. $t$ is obtained by calculating the maximum possible step size given by $n_{step,max}$ (lines \ref{agss:get_nstep_max_start}-\ref{agss:get_nstep_max_bounded}) and taking the ratio between the updated arm size and the maximum possible value for the updated arm size, the latter quantity serving as an estimate of the final target arm size (line \ref{agss:get_t}).

\begin{algorithm*}[hbt!]
\caption{\emph{get\_step\_size}}
\label{alg:gss}
\begin{algorithmic}[1]

     \Statex \textbf{Input}:
     
     \Statex \hspace{\algorithmicindent} $\delta$: ATE under the alternative hypothesis $H_{1}$
     \Statex \hspace{\algorithmicindent} $\sigma^2$: variance of the ITEs 
     \Statex \hspace{\algorithmicindent} $n_{curr}$: current arm size in the RCT
     \Statex \hspace{\algorithmicindent} $\alpha \in [0,1]$: target significance level
     \Statex \hspace{\algorithmicindent} $1-\beta \in [0,1]$: target power
     
     \Statex \hspace{\algorithmicindent} $n_{max}$: maximum allowable arm size
     
     \Statex \hspace{\algorithmicindent} $step\_size\_scale\_factor \in (0,1]$: amount to scale the step size
     
     \Statex \textbf{Output}: 
     \Statex \hspace{\algorithmicindent} $n_{step}$: number of additional subjects to recruit per arm for the RCT 
      \Statex \hspace{\algorithmicindent} $t$: estimate of the information fraction under the new arm size
     
\State $n_{target}=\sigma^{2}(z_{1-\alpha/2}+z_{1-\beta})^{2}/(2\delta^{2})$ \label{agss:get_target}
     \State $n_{step}=(n_{target}-n_{curr}) \times step\_size\_scale\_factor$ \label{agss:get_nstep_start}
     \State $n_{step}=\min(\max(n_{step},0),n_{max}-n_{curr})$ \label{agss:get_nstep_bounded}
     
     \State $n_{step,max}=n_{target}-n_{curr}$ \label{agss:get_nstep_max_start}
     \State $n_{step,max}=\min(\max(n_{step,max},0),n_{max}-n_{curr})$ \label{agss:get_nstep_max_bounded}
     
	\State $t=(n_{curr}+n_{step})/(n_{curr}+n_{step,max})$ \label{agss:get_t}

     \State \textbf{return} $n_{step}, \ t$ \label{agss:get_nstep_end}
\end{algorithmic}
\end{algorithm*} 

If the step size is zero, \emph{run\_sample\_size\_search} terminates. Otherwise, it collects RCT data over $n_{step,i}$ additional subjects per arm using \emph{conduct\_RCT}, which also takes in the RCT data collected at the prior iteration, i.e., $X_{ctrl,i-1}$ and $X_{treat,i-1}$, to yield the accumulated RCT data over the control and treatment arms, $X_{ctrl,i}$ and $X_{treat,i}$, respectively, and the updated arm,  size $n_{curr,i}$. Afterwards, \emph{run\_sample\_size\_search} calls \emph{estimate\_moments} with the updated RCT data, $X_{ctrl,i}$ and $X_{treat,i}$, 
hyperparameters, i.e., SI\_tuning\_params and $B$, and function \emph{get\_outcome} to yield updated estimates of the moments, given by $\delta_i$ and $\sigma_i^2$. 

Finally, \emph{run\_sample\_size\_search} checks for futility using \emph{check\_for\_futility},
outlined in Alg.~\ref{alg:cff}, which uses quantities calculated in the current iteration of
\emph{run\_sample\_size\_search}, i.e., the moments $\delta_i$ and $\sigma_i^2$, the current arm size
$n_{curr,i}$, and information fraction $t_i$, along with hyperparameters $\alpha_{target}$ and
$futility\_power\_boundary$. Specifically, \emph{check\_for\_futility} assesses futility with CP, following
Stochastic Curtailment. First, \emph{check\_for\_futility} calculates the interim test statistic $z$ using
$\delta$, $\sigma^2$, and the current arm size, $n_{curr}$; the variance of the ATE is calculated using
Eq.~(\ref{eq:var_mean_sample}), where the total number of samples is $2n_{curr}$ since the ITEs are pooled over the two arms (line \ref{cff:z}). Then it evaluates CP at information fraction $t$, interim test statistic $z$, and target significance level $\alpha$ using the formula adapted for two-sided testing (line \ref{cff:cp}) \cite{rosner2015fundamentals} and marks the trial as futile if the estimate of CP is below $futility\_power\_boundary$ (line \ref{cff:futility_flag}). 

\begin{algorithm*}[hbt!]
\caption{\emph{check\_for\_futility}}
\label{alg:cff}
\begin{algorithmic}[1]

     \Statex \textbf{Input}:
     
     \Statex \hspace{\algorithmicindent} $\delta$: ATE under the alternative hypothesis $H_{1}$
     \Statex \hspace{\algorithmicindent} $\sigma^2$: variance of the ITEs 
     \Statex \hspace{\algorithmicindent} $n_{curr}$: current arm size in the RCT
     \Statex \hspace{\algorithmicindent} $t$: information fraction 
     \Statex \hspace{\algorithmicindent} $\alpha \in [0,1]$: target significance level
     \Statex \hspace{\algorithmicindent} $futility\_power\_boundary \in [0,1]$: the maximum power value for which a trial would be stopped for futility

     \Statex \textbf{Output}: 
     \Statex \hspace{\algorithmicindent} $futility\_flag$: indicator for whether futility holds  
\State $z=\delta/\sqrt{\sigma^2/(2n_{curr})}$ \label{cff:z}
	 \State $cp=\Phi\Big[z/\sqrt{t(1-t)}) - z_{1-\alpha/2}/\sqrt{1-t}\Big] + \Phi\Big[-z/\sqrt{t(1-t)}) - z_{1-\alpha/2}/\sqrt{1-t}\Big]$ \label{cff:cp}
	 \State $futility\_flag=cp \leq futility\_power\_boundary$ \label{cff:futility_flag}

     \State \textbf{return} $futility\_flag$ \label{cff:end}
\end{algorithmic}
\end{algorithm*} 

\emph{run\_sample\_size\_search} terminates if the trial is deemed futile; otherwise, it continues to the next iteration. The algorithm eventually terminates by futility or by reaching the maximum arm size $n_{max}$. Setting \emph{step\_size\_scale\_factor} determines how fast the algorithm terminates, with larger values resulting in fewer iterations at the cost of larger sample sizes since the sample size is increased at a higher rate. Given this tradeoff, the trial designer can implement a sample-efficient mode (TAD-SIE-SE) or a time-efficient mode (TAD-SIE-TE) by setting \emph{step\_size\_scale\_factor} to a low or high value, respectively.

\subsection{Hypothesis Testing}

After determining the sample size, TAD-SIE performs hypothesis testing. If \emph{run\_sample\_size\_search} marks the trial as futile, TAD-SIE fails to reject the null hypothesis, in which case it accepts it by convention \cite{rosner2015fundamentals}. Otherwise, TAD-SIE performs hypothesis testing with SECRETS using the final RCT data given by $X_{ctrl,final}$ and $X_{treat,final}$, along with hyperparameters contained in SI\_tuning\_params and testing\_params, hyperparameter $T$, and function \emph{get\_outcome}, and returns the test outcome $d$.

\section{Performance Evaluation} \label{sec:eval}

In this section, we describe performance metrics and the dataset used to evaluate TAD-SIE. We describe the baseline algorithms against which we compare and ablation studies we perform to establish the premises underlying TAD-SIE. Finally, we provide details on the implementation of the algorithms and experiments.  

\subsection{Performance Metrics}

We set target power, $1-\beta_{target}$, to 80\% or 90\% and target significance level, $\alpha_{target}$, to
5\%, following typical target operating points \cite{friedman2015fundamentals}. We measure power and significance level obtained by TAD-SIE and baseline algorithms following the approach from \cite{blackston2019comparison}, which simulates many trials under the alternative and 
null settings and calculates the percentage of trials where the test procedure returns a reject, respectively. Specifically, we simulate a trial under the 
alternative setting by constructing new control and treatment arms with subjects sampled with replacement from the original RCT's control and treatment arms, respectively. Similarly, we simulate the null setting by constructing both the control and treatment arms 
with subjects sampled with replacement from the original RCT's control arm. 
  
For TAD-SIE, we also report the final arm size and number of iterations that a trial takes in order to characterize TAD-SIE's efficiency. We show the distribution of these quantities across trials using a box plot with the whisker length set at 1.5 times the interquartile range and outliers removed.

\subsection{Dataset} \label{sec:datasets}

We evaluate the framework on a real-world clinical Phase-3 parallel-group RCT and demonstrate it for a two-arm superiority trial, a design typically adopted in clinical RCTs \cite{friedman2015fundamentals}. We obtained the dataset for a sample trial, e.g.,  CHAMP (NCT01581281), \cite{powers2017trial,champ_stats}, from the National Institute of Neurologic Disease and Stroke (NINDS) \cite{ninds_archive}. The CHAMP study \cite{powers2017trial} conducted an RCT to compare the effect of different medications (amitriptyline and topiramate) on mitigating headaches. Following the setup from \cite{lala2023secrets}, we construct a dataset corresponding to a two-arm superiority trial. Specifically, we set the control arm to be the group exposed to amitriptyline and the treatment arm to be the group exposed to topiramate and define the ATE to be the difference between their average outcomes, where the outcome is the change in the score on the Pediatric Migraine Disability Assessment Scale between the 24-week endpoint and the baseline visit. This setup yields a dataset of 204 subjects, with 106 in the amitriptyline group and 98 in the topiramate group, where the ATE is -3.17 units, somewhat comparable to the -4.3 units reported in the study \cite{champ_stats}.

\subsection{Baselines}

We compare TAD-SIE against two baseline algorithms. Both algorithms implement parallel-group RCTs following a two-arm superiority setup and therefore use the two-sample \emph{t}-test for independent samples with unequal variances for hypothesis testing \cite{rosner2015fundamentals}. The approaches differ in how they determine the final sample size. 

The Fixed Sample Design baseline is a standard approach for study planning that calculates the sample size
required for target power and target significance level using Eq.~(\ref{eq:sample_size_two_sample}), where the
ATE $\delta$ and variances, $\sigma_{ctrl}^2$ and $\sigma_{treat}^2$, are pre-specified or estimated from a
prior study \cite{friedman2015fundamentals,mehta2011adaptive}. Since domain knowledge may not be available to
appropriately pre-specify these parameters and since prior comparable studies do not exist for new interventions, the baseline implements a small internal pilot study to estimate these parameters \cite{hulley2013designing}. The baseline then conducts an RCT according to the calculated sample size, which is capped at a maximum arm size set by trial constraints \cite{mehta2011adaptive}.  

Since the Fixed Sample Design baseline performs poorly in reaching a target operating point, given that it relies 
on noisy estimates of the ATE and variances, we implement an adaptive design that can increase the initial
sample size calculated by the Fixed Sample Design strategy in order to increase power. We adopt a TAD based on
CP and specifically implement the algorithm from \cite{chen2004increasing}, given its simplicity, and use the sample size formula to calculate the amount by which to increase the sample size for simplicity. We refer to this baseline as Standard-TAD.     

\subsection{Ablations}

We perform ablations on TAD-SIE in order to understand the contribution of key steps. First, we ablate the 
\emph{estimate\_moments} procedure by swapping our variance estimation strategy with a naive one to show
that our approach can reach the target operating point and do so more efficiently. Then, we run ablations on our
trend-adaptive algorithm by swapping it with a standard TAD \cite{chen2004increasing} to show that an approach
implementing a rule for sample size increases, based on control over significance level, will fail to reach the
target operating point since increases are rare. Finally, we modify TAD-SIE so that it performs sample size
estimation based on standard hypothesis testing instead of SECRETS to show that a TAD designed for a powerful test is necessary for reaching the target operating point.

For all ablations, we sweep over any relevant hyperparameters to maximize the ablated algorithm's performance. We also report results for when target power is 80\% and target significance level is 5\% since the effect of each ablation is independent of the target operating point.

\subsection{Implementation Details}

First, we describe the hyperparameters and other inputs used by TAD-SIE and the baseline algorithms. Then, we 
describe the hyperparameters used for evaluation. Finally, we describe the computing setup used to perform experiments. 

Hyperparameters for TAD-SIE are reported in Table~\ref{tab:alg_hyperparams}. While most hyperparameters can be 
determined from prior work, \emph{step\_size\_scale\_factor} is a new hyperparameter introduced by TAD-SIE;
hence, we sweep over values over the domain of the hyperparameter in increments of 0.1 to characterize its effect on performance. We simultaneously sweep over values for \emph{futility\_power\_boundary} over a sufficient range in increments of 1\% starting from 1\% since there is no standard or recommended value \cite{snapinn2006assessment}; we also include 0\% since this corresponds to no futility stopping. In addition, we set \emph{get\_outcome} to be the function 
that calculates the outcome metric defined for the dataset in Sec.~\ref{sec:datasets}. 

The baseline algorithms use the same values used by TAD-SIE for $n_{pilot}$, $\alpha_{target}$, $1-\beta_{target}$, and $n_{max}$, and calculate the same outcome metric. For Standard-TAD, we set the number of interim analyses to 1 since this is common in practice \cite{mehta2011adaptive} and perform interim analysis at the initial planned sample size by setting $t=0.99$ (CP is undefined at $t=1$) since this is ideal for assessing whether the sample size can be increased and the amount by which it needs to be increased \cite{mehta2011adaptive}. 

\begin{table*}[h!]
\centering
\caption{Hyperparameters used for TAD-SIE. The ``Reference" column lists references that support the choice of the hyperparameter value.}
\label{tab:alg_hyperparams} 
\resizebox{\columnwidth}{!}{%
\begin{tabular}{|M{0.33\linewidth} M{0.33\linewidth} M{0.33\linewidth}|}
\hline
\textbf{Hyperparameter} & \textbf{Value} & \textbf{Reference} \\ \hline 
$n_{pilot}$ & 30 & \cite{teare2014sample,birkett1994internal} \\ \hline 
SI\_tuning\_params & \{ $r_{train,val}=7/3$ \} & \cite{lala2023secrets} \\ \hline 
$B$ & 100 & \cite{lala2023secrets} \\ \hline
$\alpha_{target}$ & 5\% & \cite{friedman2015fundamentals} \\ \hline 
$1-\beta_{target}$ & 80\% / 90\%  & \cite{friedman2015fundamentals} \\ \hline 
$n_{max}$ & 1500 & \cite{fda_phases} \\ \hline 
\emph{step\_size\_scale\_factor} &  $(0,1]$ & n/a \\ \hline 
\emph{futility\_power\_boundary} & $[0,20\%]$ & n/a \\ \hline
testing\_params & \{ $\alpha_{target}=5\%$, $t_{lower}=3$, $t_{upper}=5$, $t_{limit,exp}=2$, $n_{s}=10$, $\delta_{\alpha_{target}}=$ $1e$-$3$ \} & \cite{lala2023secrets} \\ \hline 
$T$ & 100 & \cite{lala2023secrets}  \\ \hline 
\end{tabular}%
}
\end{table*}
  
For evaluation, we set the number of trials to 100 since this was sufficient for powers and significance levels to stabilize. For implementation ease, we pre-computed results across arm sizes sampled between the pilot study size to the maximum arm size and then projected interim and final sample sizes to this set. We sampled in increments of 25 since this was sufficient for consecutively sampled arm sizes to have comparable means and variances.

We implemented the framework and experiments with Python using standard numerical packages and conducted experiments using 28-32 CPU cores, 2-4GB of memory per CPU, and Intel processors (e.g., 2.4 GHz Skylake and 2.6 GHz Intel Skylake).

\section{Experimental Results} \label{sec:results}

This section presents results from our experiments. First, we characterize the effect of key hyperparameters underlying TAD-SIE; given these results, we determine adequate hyperparameter settings for the sample-efficient and time-efficient modes of TAD-SIE. Then, we demonstrate the effectiveness of each mode of TAD-SIE by comparing them against the baselines. Finally, we analyze results from the ablation studies to understand the core components enabling TAD-SIE.

\subsection{Hyperparameter Selection} \label{sec:hyperparam}

In this section, we present results demonstrating the effect of hyperparameters that govern dynamics under TAD-SIE. First, we assess the effect of \emph{step\_size\_scale\_factor} and \emph{futility\_power\_boundary} on power and significance level to identify a range of hyperparameter values that are feasible, i.e., can reach the target operating point. Given the feasible range, we analyze the effect of \emph{step\_size\_scale\_factor} on the sample and time efficiencies of solutions under TAD-SIE in order to suggest hyperparameter settings for the sample-efficient and time-efficient modes. 

The effect of \emph{step\_size\_scale\_factor} and \emph{futility\_power\_boundary} on TAD-SIE's performance,
when target significance level is 5\% and target power is 80\%, is shown in Fig.~\ref{fig:feasible_set}. As expected,
setting \emph{futility\_power\_boundary} to 0\% fails to control the significance level (significance levels are
around 10\%); increasing it reduces the significance level and power across all values for
\emph{step\_size\_scale\_factor}, although feasible solutions are less likely to exist when
\emph{step\_size\_scale\_factor} is too large, i.e., values larger than 0.7. For example, when
\emph{step\_size\_scale\_factor} is 0.9, setting $\emph{futility\_power\_boundary}$ to just 1\% already reduces
the significance level to 5\% and power to 78\% and when \emph{step\_size\_scale\_factor} is 1, setting
$\emph{futility\_power\_boundary}$ to 1\% reduces the significance level to 4\% and power to 67\%. In contrast, when \emph{step\_size\_scale\_factor} is 0.5, setting \emph{futility\_power\_boundary} between 3\% and 7\% obtains 5\% significance level 
and at least 80\% power, with similar results when \emph{step\_size\_scale\_factor} is 0.1 and 0.3. Similar trends hold when target power is 90\%, as shown in Fig.~\ref{fig:feasible_set_0.1}.  

\begin{figure}[hbt!]
\makebox[\textwidth][c]{\includegraphics[width=1.1\textwidth]{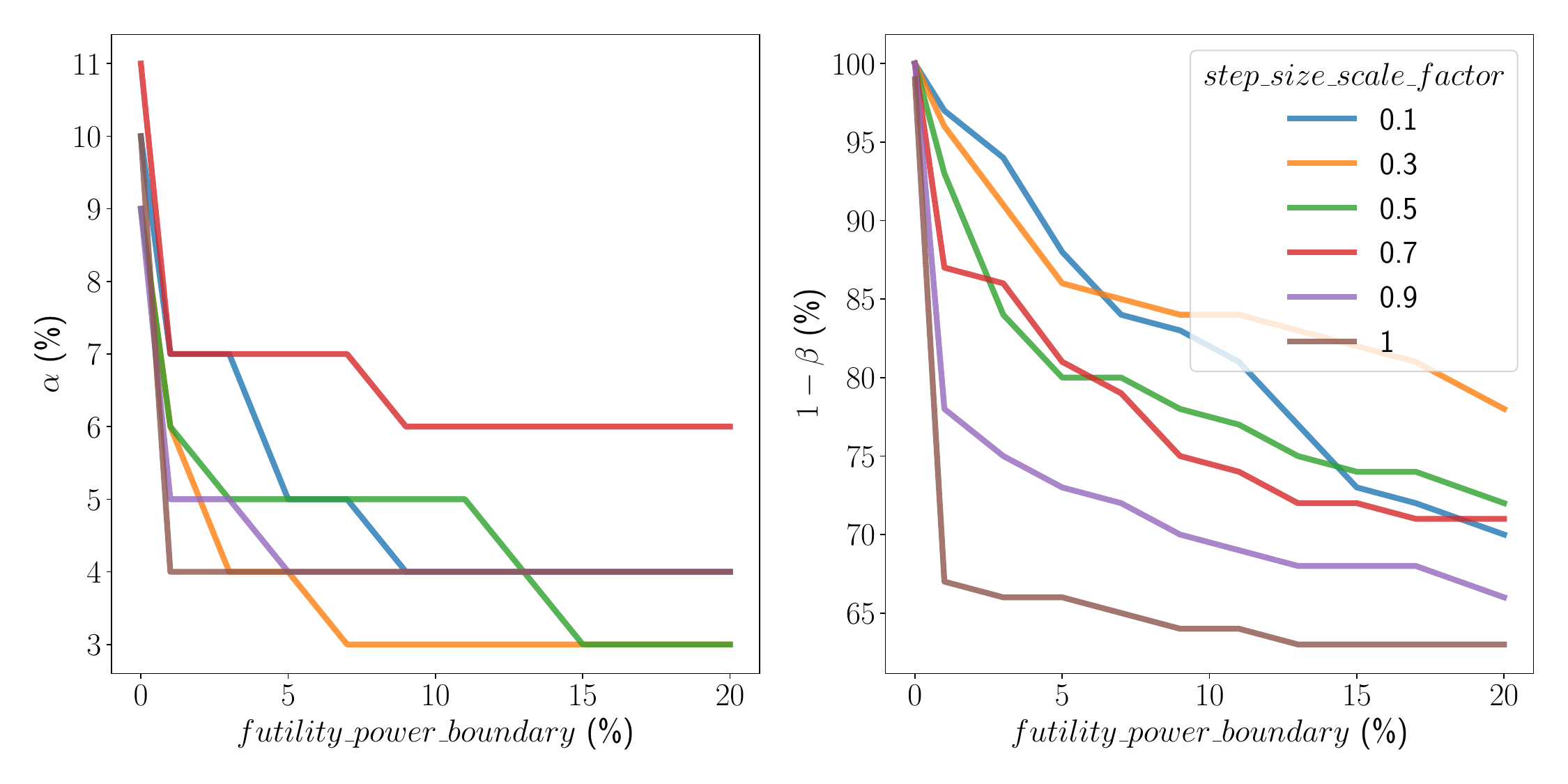}}
    \caption{Effect of \emph{futility\_power\_boundary} and \emph{step\_size\_scale\_factor} on the significance level and power under TAD-SIE with $1-\beta_{target}=80\%$ and $\alpha_{target}=5\%$.}
    \label{fig:feasible_set}
\end{figure}

\begin{figure}[hbt!]
\makebox[\textwidth][c]{\includegraphics[width=1.1\textwidth]{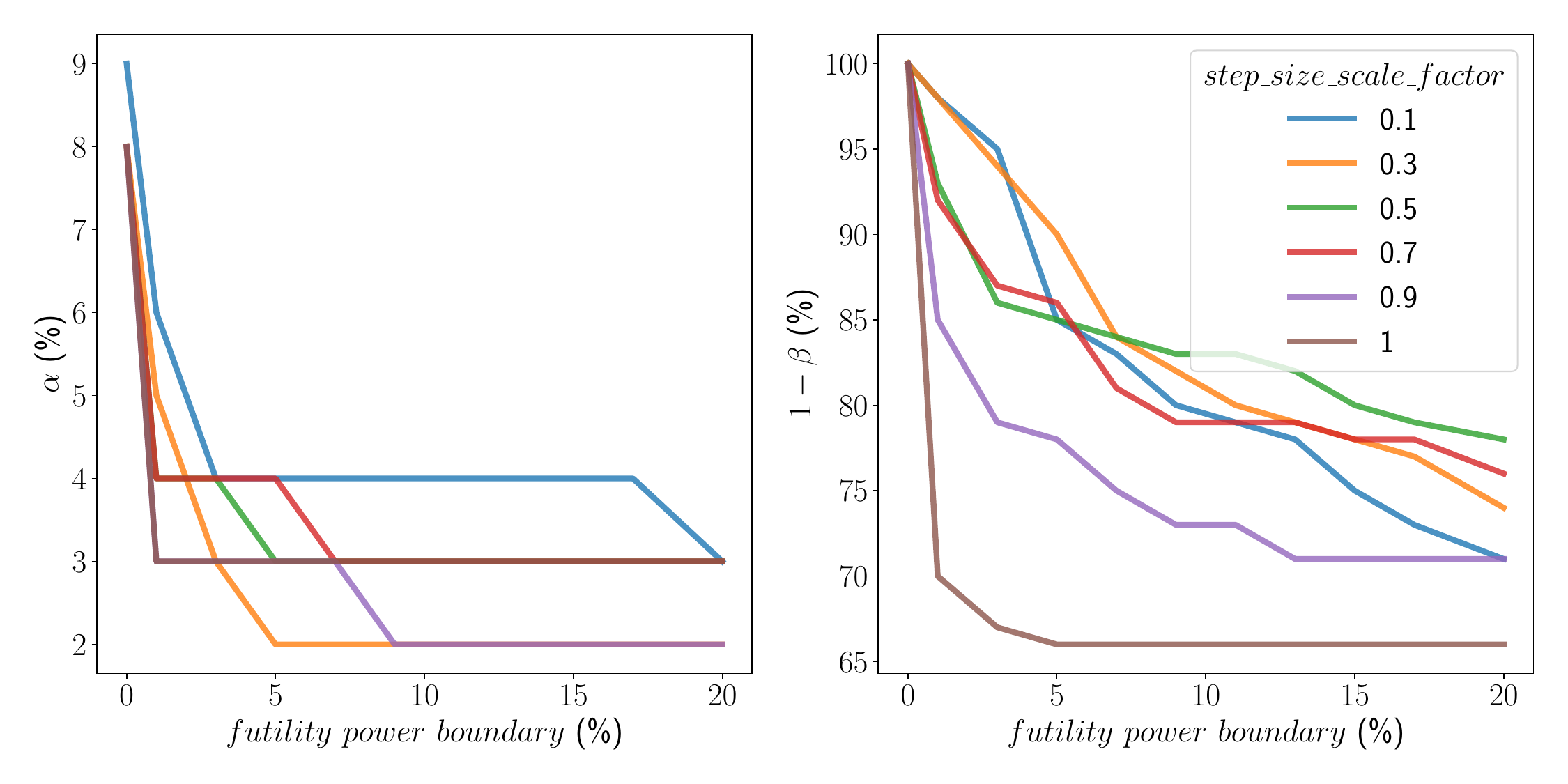}}
    \caption{Effect of \emph{futility\_power\_boundary} and \emph{step\_size\_scale\_factor} on the significance level and power under TAD-SIE with $1-\beta_{target}=90\%$ and $\alpha_{target}=5\%$.}
    \label{fig:feasible_set_0.1}
\end{figure}

Feasible solutions are unlikely to exist when setting \emph{step\_size\_scale\_factor} to large values because doing so increases the chance for futility stopping when $H_1$ holds, which makes it difficult to reach target power while 
controling the significance level. For example, when \emph{futility\_power\_boundary} is 1\%, 7\% and 22\% of trials terminate by futility when $H_1$ holds under a \emph{step\_size\_scale\_factor} of 0.5 and 0.9, respectively. Setting \emph{step\_size\_scale\_factor} to a large value increases the chance for futility stopping by increasing the information fraction $t$, given that $t$ is more likely to be determined by \emph{step\_size\_scale\_factor} at initial interim analyses when the current arm size is small (per line \ref{agss:get_t} in Alg.~\ref{alg:gss}). A higher information fraction shrinks CP over a large range of test statistic values, as shown in Fig.~\ref{fig:t_vs_cp}, thereby triggering futility stopping. Given the relationship between \emph{step\_size\_scale\_factor} and futility stopping, we only consider solutions where \emph{step\_size\_scale\_factor} is not too large. Setting \emph{step\_size\_scale\_factor} below 0.7 works well in yielding a set of feasible solutions across typical target operating points (i.e., 80\% or 90\% power and 5\% significance level).

\begin{figure}[hbt!]
\makebox[\textwidth][c]{\includegraphics[width=0.6\textwidth]{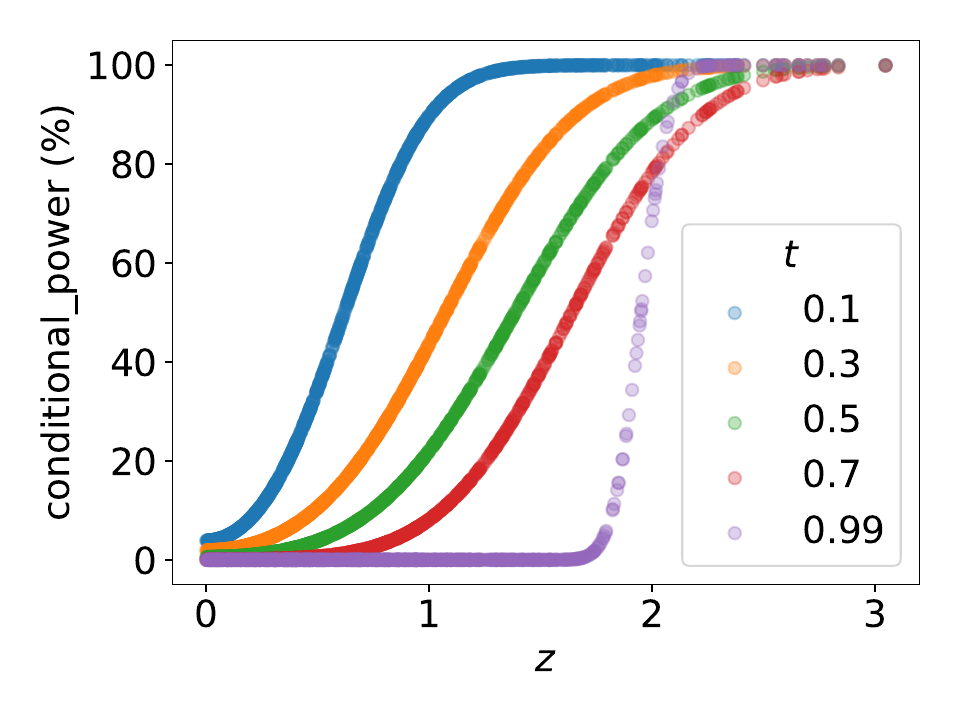}}
    \caption{Conditional power under two-sided testing as a function of $z$, the interim test statistic, and $t$, the information fraction (formula specified in line \ref{cff:cp} of Alg.~\ref{alg:cff}). Function plotted over nonnegative values of $z$ since the function is symmetric in $z$.}
    \label{fig:t_vs_cp}
\end{figure}

By adopting smaller values for \emph{step\_size\_scale\_factor}, TAD-SIE is more robust to futility stopping
when $H_1$ holds and futility stopping rates remain comparable when $H_0$ holds since the ATE is small.
Therefore, values for \emph{futility\_power\_boundary} generally transfer well, i.e., they yield operating
points in a similar range across other values for \emph{step\_size\_scale\_factor}. This phenomenon is shown in
Fig.~\ref{fig:futility_boundary} when the target significance level is 5\% and target power is 80\%. For
example, when \emph{futility\_power\_boundary} is 1\%, setting \emph{step\_size\_scale\_factor} to any value
between 0.1 and 0.6 yields significance levels in a higher range (5-7\%) and powers in a high range (88-97\%)
and when \emph{futility\_power\_boundary} is 20\%, setting \emph{step\_size\_scale\_factor} to any value between
0.1 and 0.6 yields significance levels in a lower range (3-4\%) and powers in a low range (70-78\%). Among
hyperparameter settings that meet the target operating point, indicated by the region enclosed by dashed lines in Fig.~\ref{fig:futility_boundary}, setting \emph{futility\_power\_boundary} to values near 10\% work well across values for \emph{step\_size\_scale\_factor} between 0.1 and 0.6; therefore, we set it to 11\%. By applying a similar process to TAD-SIE when target power is 90\% and target significance level is 5\%, we find that setting \emph{futility\_power\_boundary} to values near 1\% work well, as shown in Fig.~\ref{fig:futility_boundary_0.1} (the boundary needs to be reduced to increase power); therefore, we set it to 1\% in this case.

\begin{figure}[hbt!]
	\makebox[\textwidth][c]{\includegraphics[width=0.5\textwidth,angle=-90]{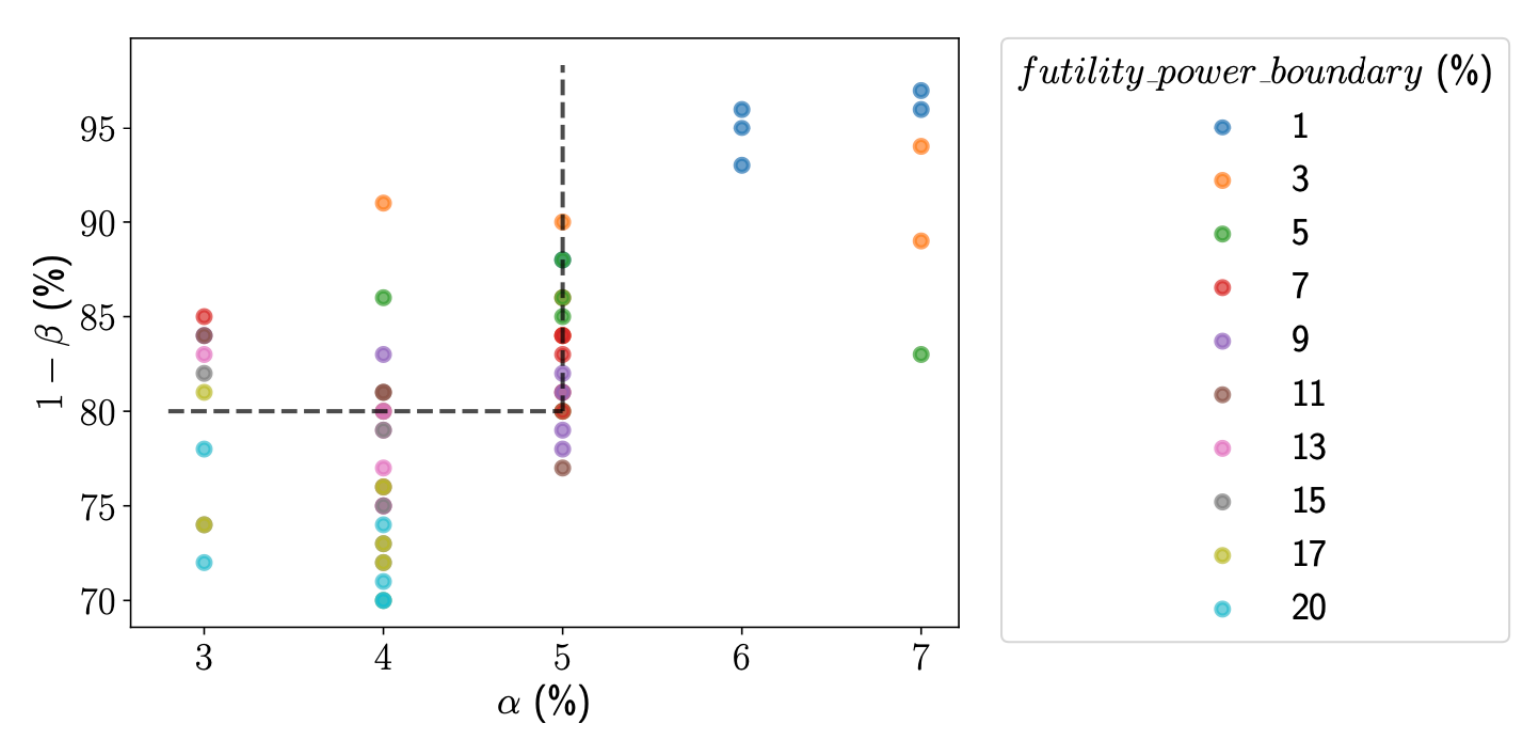}}
    \caption{Significance level and power under different values of \emph{futility\_power\_boundary} and
\emph{step\_size\_scale\_factor} for TAD-SIE with $1-\beta_{target}=80\%$ and $\alpha_{target}=5\%$. Each point
corresponds to a different value of \emph{step\_size\_scale\_factor} that has been sampled over the following
set of values: $\{0.1,0.2,0.3,0.4,0.5,0.6\}$.}
    \label{fig:futility_boundary}
\end{figure}

\begin{figure}[hbt!]
	\makebox[\textwidth][c]{\includegraphics[width=0.5\textwidth,angle=-90]{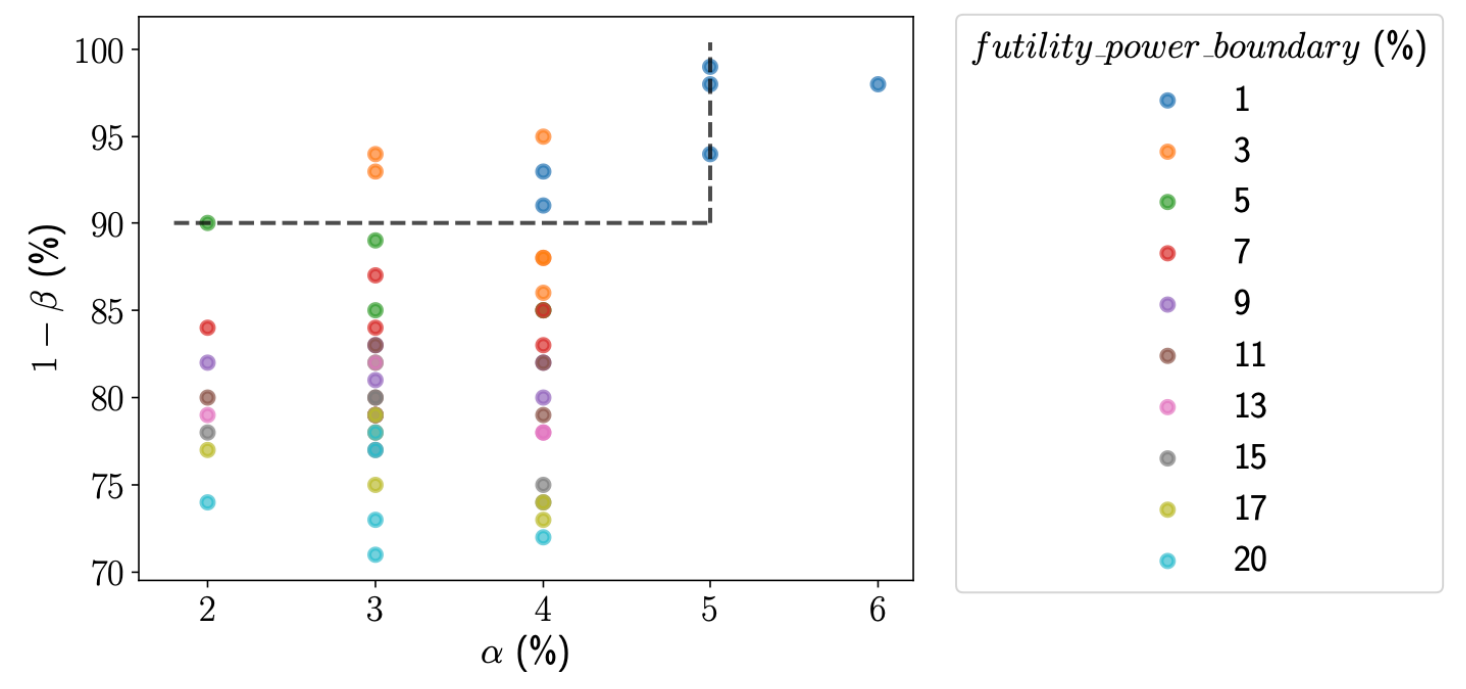}}
    \caption{Significance level and power under different values of \emph{futility\_power\_boundary} and \emph{step\_size\_scale\_factor} for TAD-SIE with $1-\beta_{target}=90\%$ and $\alpha_{target}=5\%$. Each point corresponds to a different value of \emph{step\_size\_scale\_factor} that has been sampled over the following set of values: $\{0.1,0.2,0.3,0.4,0.5,0.6\}$.}
    \label{fig:futility_boundary_0.1}
\end{figure}

Given an effective value for \emph{futility\_power\_boundary}, we vary \emph{step\_size\_scale\_factor} to yield
solutions that reach the target operating point but are efficient in different respects, i.e., the final arm size and
number of iterations incurred under TAD-SIE. We demonstrate this tradeoff when target power is 80\% and target
significance level is 5\%, as shown in Fig.~\ref{fig:sample_size_scale_factor_0.2}. For example, when $H_1$ holds and
\emph{step\_size\_scale\_factor} is 0.1, the median final arm size is 375 and the median number of iterations is 5 and
when \emph{step\_size\_scale\_factor} is 0.6, the median final arm size is 688 and the median number of iterations is
2. This tradeoff occurs because increasing \emph{step\_size\_scale\_factor} increases initial step sizes, which can
overshoot and cause the algorithm to terminate earlier since the step size goes to zero in a later iteration; in
addition, increasing \emph{step\_size\_scale\_factor} increases the chance for futility stopping through its effect on
information fraction, which can also cause trials to terminate earlier. When $H_0$ holds, similar trends manifest;
however, since the ATE is small, this triggers futility stopping, which generally keeps the number of iterations and
arm size lower than those observed under $H_1$. For example, when \emph{step\_size\_scale\_factor} is 0.1, the median
final arm size is 300 and the median number of iterations is 4 and when \emph{step\_size\_scale\_factor} is 0.6, the
median final arm size is 538 and the median number of iterations is 2. Similar trends hold when the target power is
90\% and target significance level is 5\%, as shown in Fig.~\ref{fig:sample_size_scale_factor_0.1}, although the
sample size generally tops out under $H_0$, even when \emph{step\_size\_scale\_factor} is low, since target power is higher. Based on these trends, we set \emph{step\_size\_scale\_factor} to 0.1 to implement a sample-efficient version of TAD-SIE, i.e., TAD-SIE-SE, and \emph{step\_size\_scale\_factor} to 0.6 to implement a time-efficient version of TAD-SIE, i.e., TAD-SIE-TE. 

\begin{figure}[hbt!]
	\makebox[\textwidth][c]{\includegraphics[width=1\textwidth]{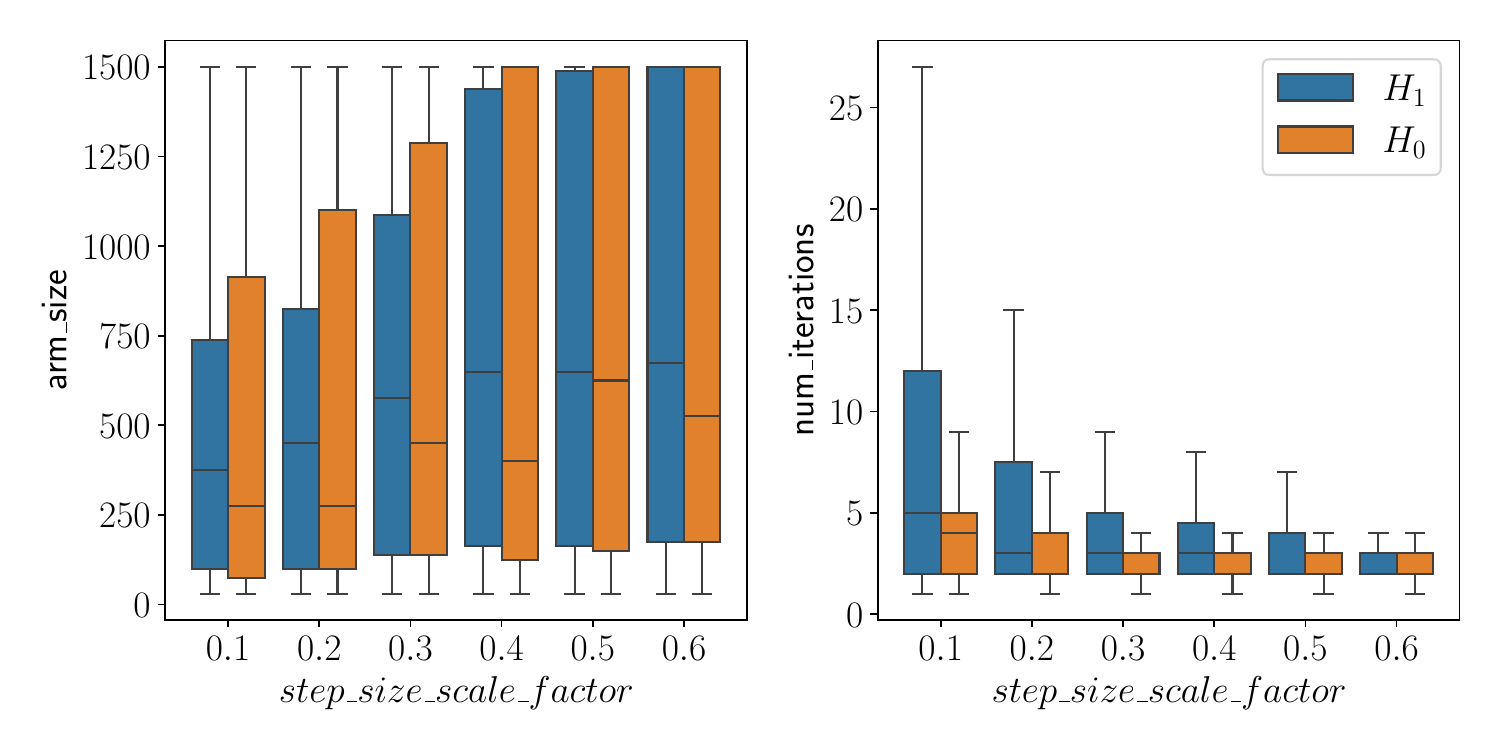}}
    \caption{Box plot over final arm size and number of iterations for TAD-SIE with $1-\beta_{target}=80\%$ and $\alpha_{target}=5\%$ across different values of \emph{step\_size\_scale\_factor} when either $H_1$ or $H_0$ holds.}
    \label{fig:sample_size_scale_factor_0.2}
\end{figure}

\begin{figure}[hbt!]
\makebox[\textwidth][c]{\includegraphics[width=1\textwidth]{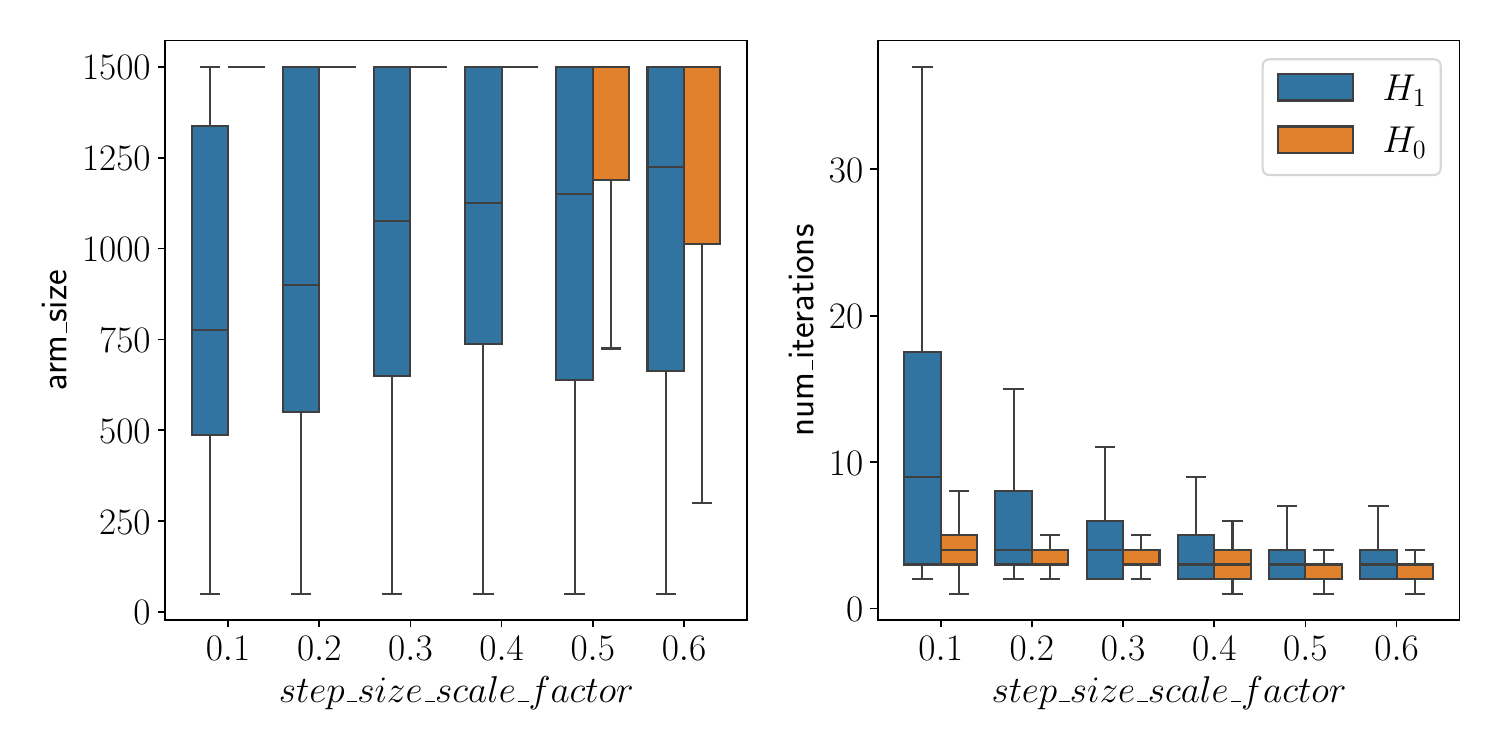}}
    \caption{Box plot over final arm size and number of iterations for TAD-SIE with $1-\beta_{target}=90\%$ and $\alpha_{target}=5\%$ across different values of \emph{step\_size\_scale\_factor} when either $H_1$ or $H_0$ holds.}
    \label{fig:sample_size_scale_factor_0.1}
\end{figure}

\subsection{TAD-SIE vs. Baselines}

Given hyperparameter values for TAD-SIE, i.e., TAD-SIE-SE and TAD-SIE-TE, we compare its performance against the baseline 
algorithms. 
TAD-SIE is able to reach desired performance under typical target operating points while the baselines cannot and obtain
low accuracies. Results across the algorithms for when target power is set to 80\% and target significance level
is set to 5\% are shown in Table~\ref{tab:baselines_0.8}. Fixed Sample Design results in lower power (48\%) and
higher significance level (9\%) since it uses noisy estimates obtained from a small pilot study to estimate the
required sample size. Standard-TAD improves the Fixed Sample Design by allowing the initial sample size
estimated under Fixed Sample Design to be increased based on interim data; however, the actual improvement is
marginal (i.e., power increases to 49\% while significance level decreases to 6\%) since few trials perform an
increase when $H_1$ holds given that Standard-TAD imposes a stringent condition for when the sample size can be
increased in order to control the significance level. Specifically, when $H_0$ holds, 7\% of trials meet the criterion and 4\% of trials increase the sample size, thereby preventing type-1 inflation; however, when $H_1$ holds, only 51\% of trials meet the criterion and 17\% of trials increase the sample size, thereby precluding gains in power. 

TAD-SIE improves upon Standard-TAD by integrating SECRETS to boost power and allowing trials to increase sample size 
without any condition while controlling the significance level with futility stopping. Specifically, both TAD-SIE-SE and TAD-SIE-TE have powers and significance levels that are comparable to the target operating point. The difference between the two solutions is that TAD-SIE-SE arrives at the target operating point with a lower sample size at the cost of more iterations while TAD-SIE-TE arrives at the target operating point in fewer iterations at the cost of higher sample size, as shown in Fig.~\ref{fig:se_vs_te_0.8}. For example, when $H_0$ holds, TAD-SIE-SE has a median arm size of 300 and requires a median of 4 iterations while TAD-SIE-TE has a median arm size of 538 and requires a median of 2 iterations. Similar trends appear when $H_1$ holds, although the median arm sizes are higher, i.e., 375 and 688 under TAD-SIE-SE and TAD-SIE-TE, respectively, and the median number of iterations is comparable or higher, i.e., 5 and 2 under TAD-SIE-SE and TAD-SIE-TE, respectively. The increase in arm size and number of iterations under $H_1$ is expected since futility stopping is less likely to be invoked. 
       
Similar trends hold when target power is 90\% and target significance level is 5\%, with TAD-SIE-SE and TAD-SIE-TE 
outperforming the baselines, though all methods show an expected increase in power, as shown in 
Table~\ref{tab:baselines_0.9}. Similar tradeoffs between sample size and number of iterations under TAD-SIE-SE and 
TAD-SIE-TE are also observed, as shown in Fig.~\ref{fig:se_vs_te_0.9}, though both modes show expected increases in median sample sizes and number of iterations. In addition, by aiming for a higher power, TAD-SIE-SE and TAD-SIE-TE rapidly reach the maximum arm size under $H_0$, hence incurring larger median arm sizes under $H_0$ than under $H_1$, which also leads to early termination, thereby keeping the number of iterations low under $H_0$.        

\begin{table*}[h!]
\centering
\caption{Performance under baseline algorithms and TAD-SIE with $1-\beta_{target}=80\%$ and $\alpha_{target}=5\%$.}
\label{tab:baselines_0.8} 
\resizebox{\columnwidth}{!}{%
\begin{tabular}{|M{0.33\linewidth} M{0.33\linewidth} M{0.33\linewidth}|}
\hline

\textbf{Method} & $\boldsymbol{1-\beta}$ \textbf{(\%)}  & $\boldsymbol{\alpha}$ \textbf{(\%)}  \\ \hline

Fixed Sample Design & 48 & 9\\ \hline 
Standard-TAD & 49 & 6 \\ \hline
TAD-SIE-SE & 81 & 4 \\ \hline 
TAD-SIE-TE & 80 & 4 \\ \hline 
\end{tabular}%
}
\end{table*}

\begin{figure}[hbt!]
\makebox[\textwidth][c]{\includegraphics[width=1\textwidth]{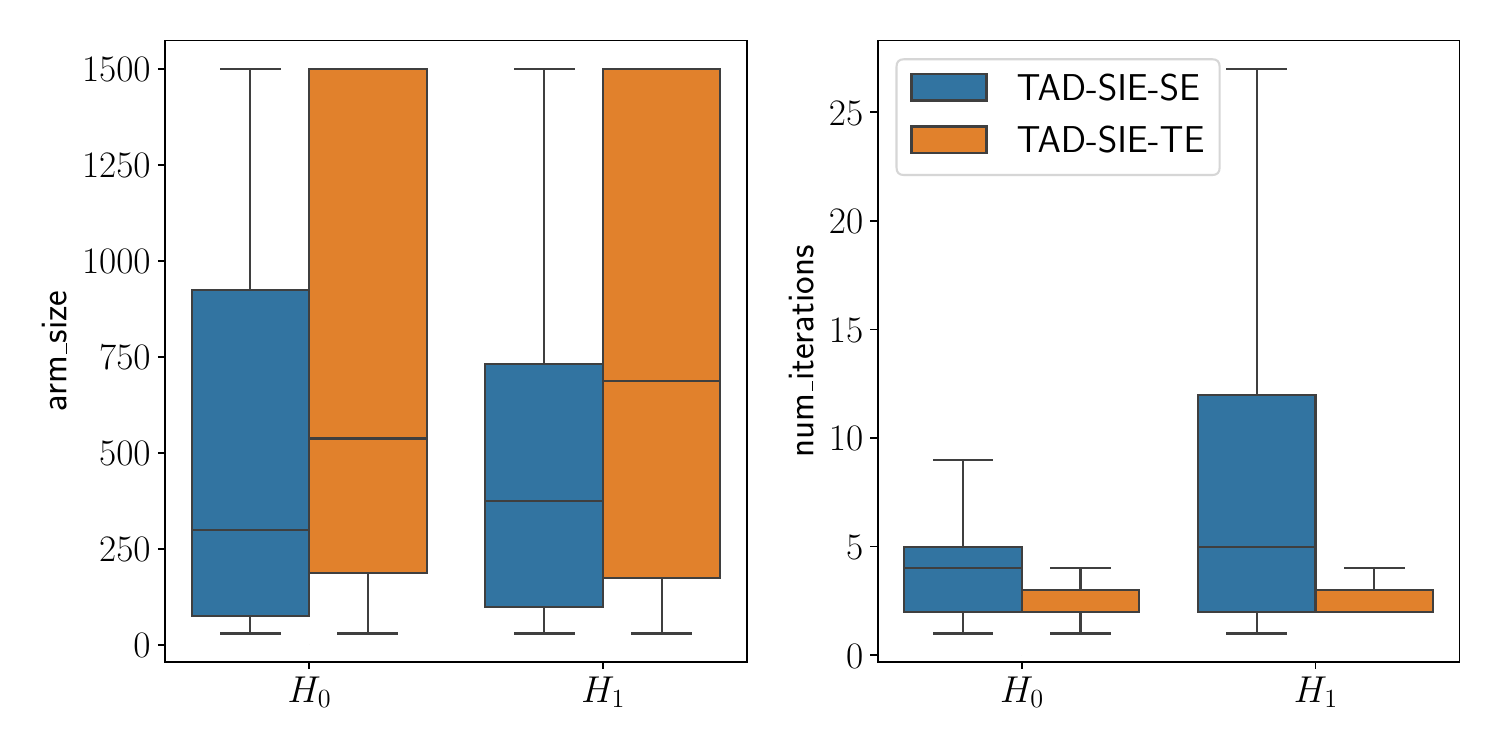}}
    \caption{Box plot over final arm size and number of iterations when either $H_0$ or $H_1$ holds for the sample-efficient and time-efficient modes of TAD-SIE with $1-\beta_{target}=80\%$ and $\alpha_{target}=5\%$.}
    \label{fig:se_vs_te_0.8}
\end{figure}

\begin{table*}[h!]
\centering
\caption{Performance under baseline algorithms and TAD-SIE when $1-\beta_{target}=90\%$ and $\alpha_{target}=5\%$.}
\label{tab:baselines_0.9} 
\resizebox{\columnwidth}{!}{%
\begin{tabular}{|M{0.33\linewidth} M{0.33\linewidth} M{0.33\linewidth}|}
\hline

\textbf{Method} & $\boldsymbol{1-\beta}$ \textbf{(\%)}  & $\boldsymbol{\alpha}$ \textbf{(\%)}  \\ \hline 

Fixed Sample Design & 57 & 8 \\ \hline 
Standard-TAD & 59 & 4 \\ \hline
TAD-SIE-SE & 98 & 6 \\ \hline 
TAD-SIE-TE & 91 & 4 \\ \hline 
\end{tabular}%
}
\end{table*}

\begin{figure}[hbt!]
\makebox[\textwidth][c]{\includegraphics[width=1\textwidth]{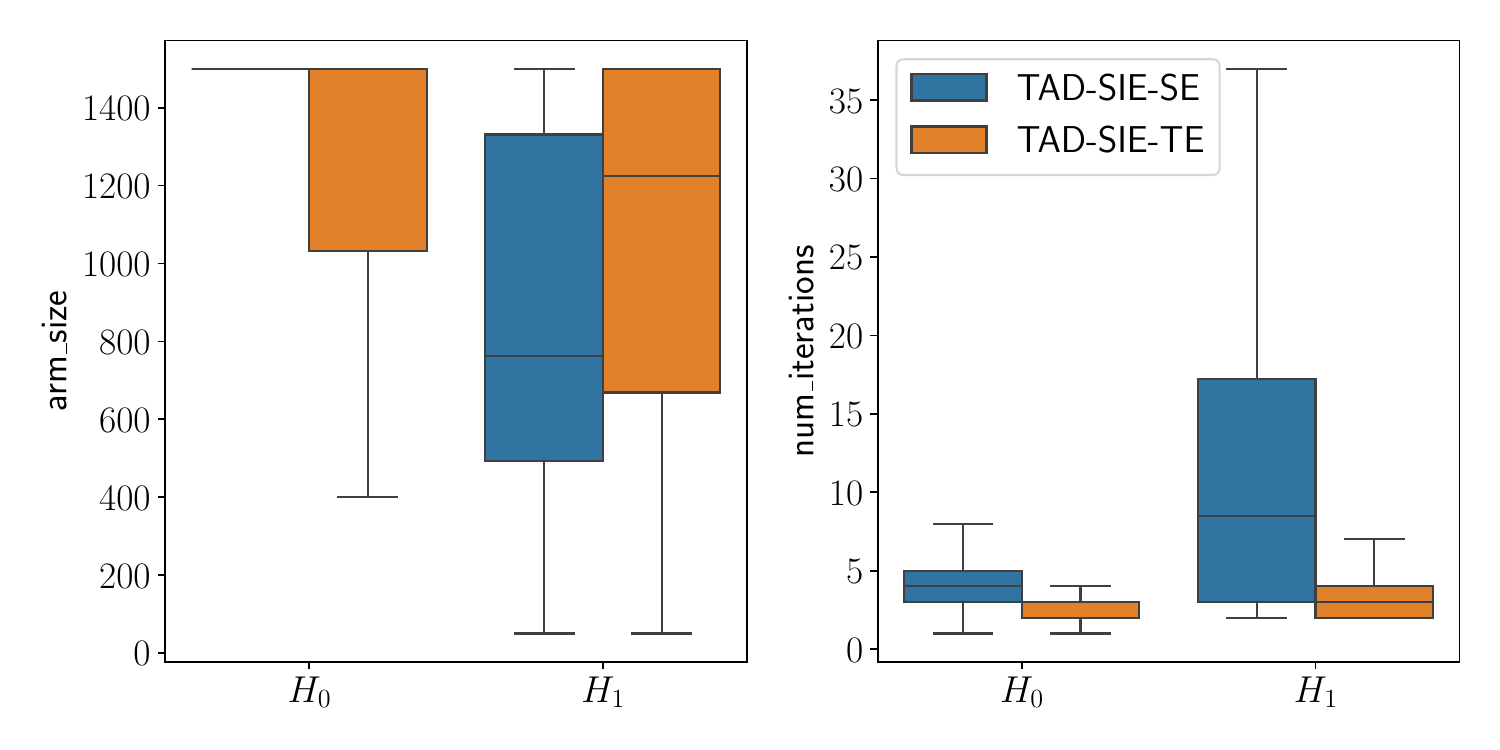}}
    \caption{Box plot over final arm size and number of iterations when either $H_0$ or $H_1$ holds for the sample-efficient and time-efficient modes of TAD-SIE with $1-\beta_{target}=90\%$ and $\alpha_{target}=5\%$.}
    \label{fig:se_vs_te_0.9}
\end{figure}


\subsection{Ablation Studies}

Having established TAD-SIE's effectiveness, we present results from the ablation studies to demonstrate the importance of each component underlying TAD-SIE. Specifically, we establish the importance of the moment estimation procedure, trend-adaptive algorithm, and hypothesis testing algorithm.      

\subsubsection{Moment Estimation Strategy} \label{sec:ablation:moment_est}

Our procedure for estimating moments, specifically the variance, is necessary for reducing bias in sample size
estimates under TAD-SIE. To demonstrate this, we swap the variance calculation step in
\emph{estimate\_moments} with a naive approach that estimates variance that defines the sample size formula
(Eq.~(\ref{eq:sample_size})) by variance over the ITEs. The naive approach can either underestimate the variance, thereby lowering the final sample size at the cost of reaching target power, or overestimate the variance, thereby increasing the final sample size at the cost of lowering the algorithm's efficiency in reaching the target power. Empirically, we find the former result does not hold as TAD-SIE with the naive moment estimation strategy is able to reach the target operating point of 80\% power and 5\% significance level. Therefore, we compare the two moment estimation strategies with respect to TAD-SIE's efficiency, i.e., the final sample size and number of iterations incurred; this is shown for TAD-SIE-SE in Fig.~\ref{fig:moment_est_arm_size_num_iter_sample_efficient}. When $H_0$ holds, TAD-SIE-SE with the naive approach has a median arm size of 463 and incurs a median of 3 iterations while TAD-SIE-SE with our approach has a median arm size of 300 and incurs a median of 4 iterations. When $H_1$ holds, TAD-SIE-SE with the naive approach has a median arm size of 600 and incurs a median of 8.5 iterations while TAD-SIE-SE with our approach has a median arm size of 375 and incurs a median of 5 iterations. Similar trends manifest for 
TAD-SIE-TE, as shown in Fig.~\ref{fig:moment_est_arm_size_num_iter_time_efficient}, although the number of iterations is comparably low across both approaches due to the rapid increase in the sample size over the iterations. The results imply that the naive moment estimation strategy overestimates the variance substantially compared to our moment estimation strategy, thereby increasing the final sample size and subsequently the number of iterations required by the algorithm to reach the target operating point. Our moment estimation strategy allows TAD-SIE to be more efficient.

\begin{figure}[hbt!]
\makebox[\textwidth][c]{\includegraphics[width=1\textwidth]{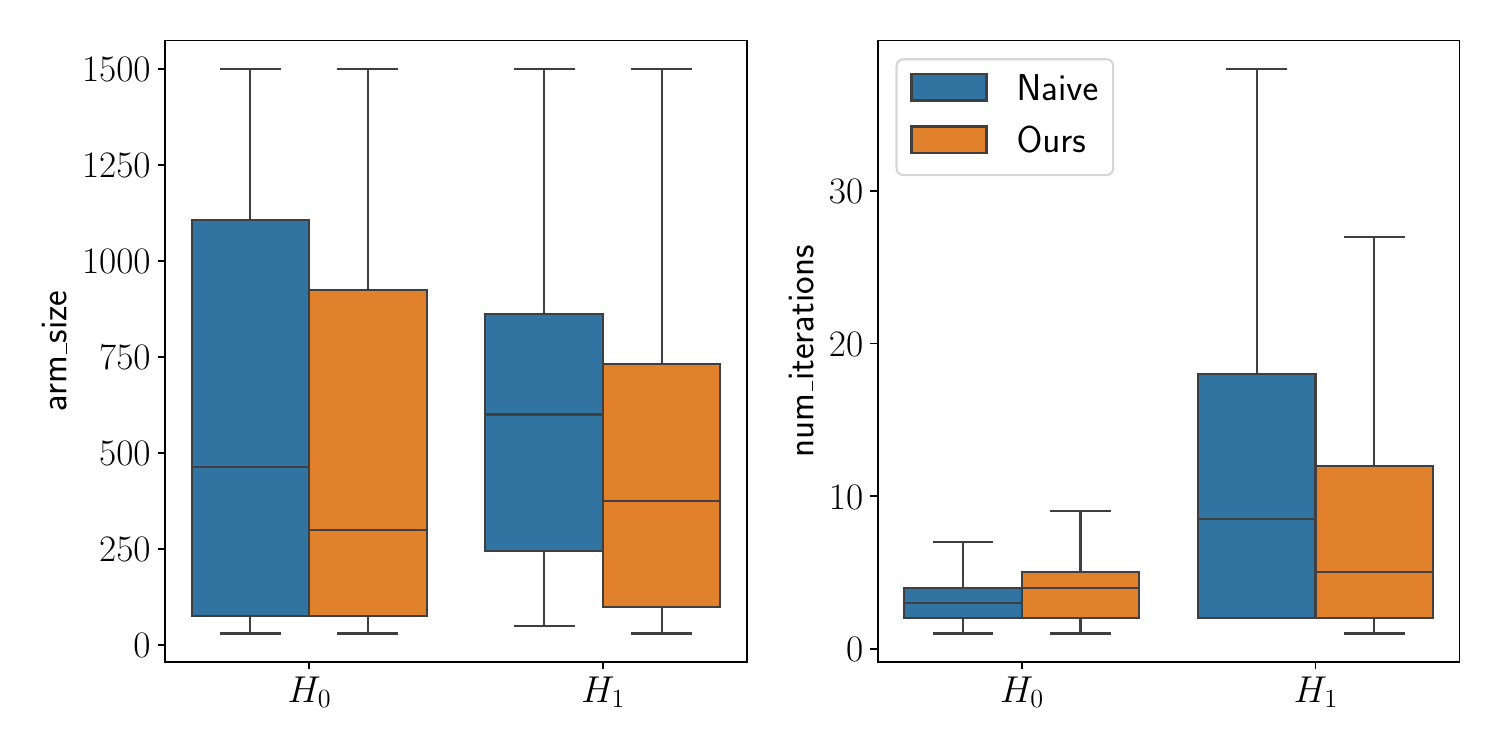}}
    \caption{Box plot over final arm size and number of iterations under the sample-efficient mode of TAD-SIE using the naive approach or our approach for estimating moments with $1-\beta_{target}=80\%$ and $\alpha_{target}=5\%$.}
    \label{fig:moment_est_arm_size_num_iter_sample_efficient}
\end{figure}

\begin{figure}[hbt!]
\makebox[\textwidth][c]{\includegraphics[width=1\textwidth]{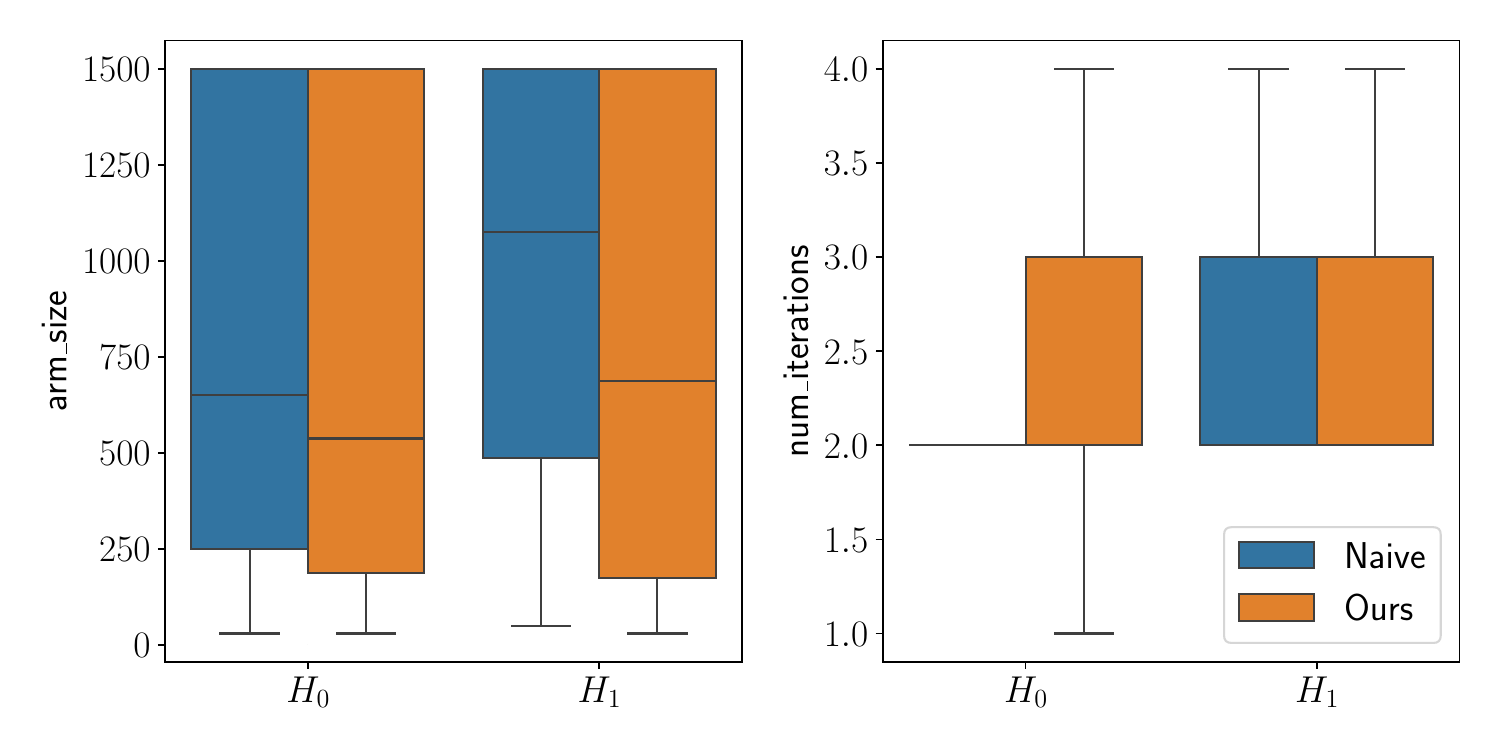}}
    \caption{Box plot over final arm size and number of iterations under the time-efficient mode of TAD-SIE using the naive approach or our approach for estimating moments with $1-\beta_{target}=80\%$ and $\alpha_{target}=5\%$.}
    \label{fig:moment_est_arm_size_num_iter_time_efficient}
\end{figure}

The moment estimation strategy also affects futility stopping by determining CP. To isolate its effect, we use our moment estimation strategy for sample size calculation in \emph{get\_step\_size} but vary the strategy used to calculate CP, i.e., the interim test statistic $z$ in \emph{check\_for\_futility} (line \ref{cff:z} of Alg.~\ref{alg:cff}). The results of this ablation for TAD-SIE-SE with target power 80\% and target significance level 5\% are shown in Fig.~\ref{fig:calc_cp_correct_se}. While significance levels remains comparable under the two approaches across values of \emph{futility\_power\_boundary}, power is lower under the ablated version of TAD-SIE that uses the naive moment estimation strategy for calculating CP. Since the naive approach overestimates variance, this shrinks the value of the interim test statistic and subsequently reduces CP (as seen in Fig.~\ref{fig:t_vs_cp}), thereby increasing the chance for futility stopping. The effect on futility stopping rates is negligible when $H_0$ holds since the magnitude of the ATE is already low and determines the futility stopping rate; e.g., when \emph{futility\_power\_boundary} is 1\%, futility stopping rates are 91\% and 89\% under our and the naive moment estimation strategies, respectively, and when \emph{futility\_power\_boundary} is at least 9\%, futility stopping rates are 96\% under both moment estimation strategies. However, when $H_1$ holds, the magnitude of the ATE is not low so the deflation of the interim test statistic induced under the naive moment estimation approach causes power to drop substantially. Similar trends manifest under TAD-SIE-TE with target power 80\% and target significance level 5\%, as shown in Fig.~\ref{fig:calc_cp_correct_te}. Therefore, our moment estimation strategy is essential for sustaining high power.             

\begin{figure}[hbt!]
\makebox[\textwidth][c]{\includegraphics[width=1\textwidth]{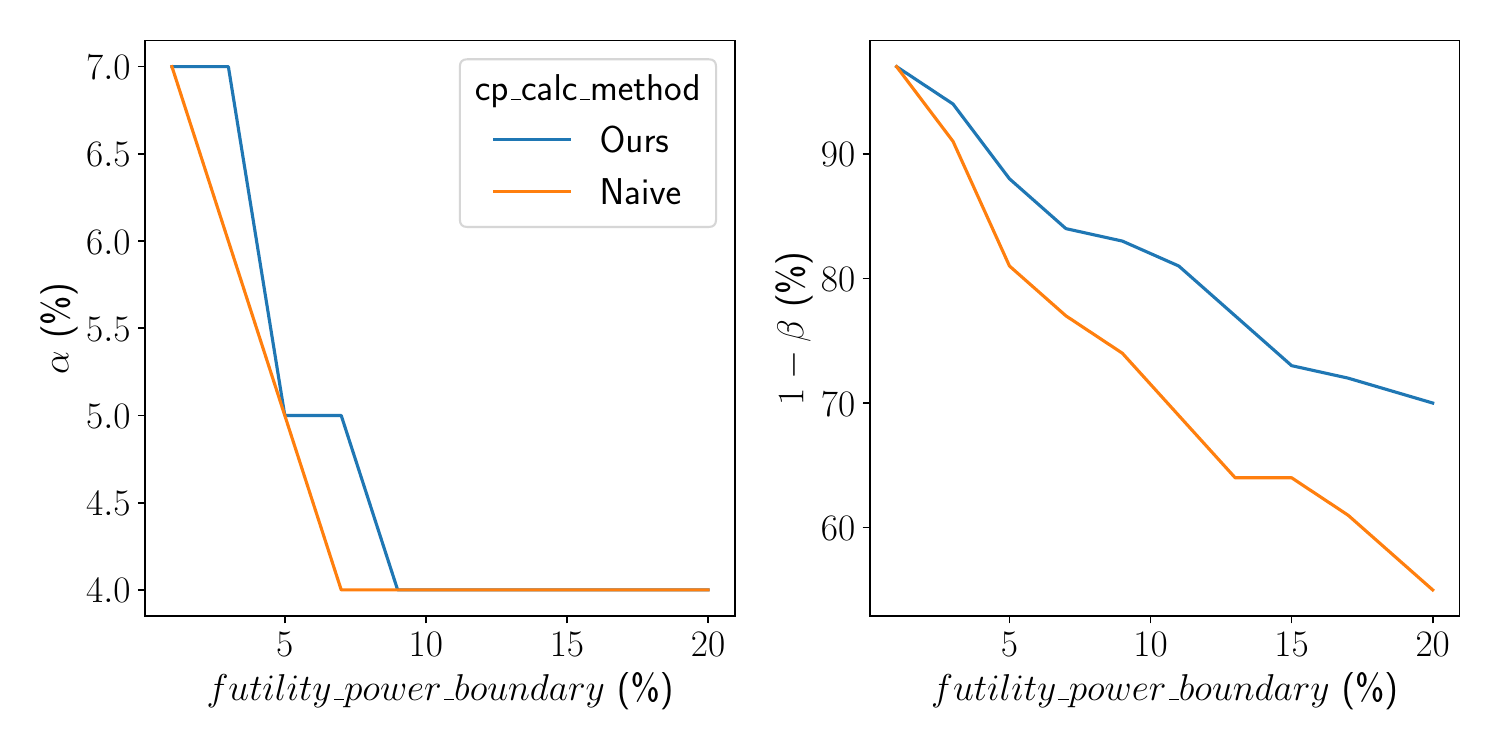}}
    \caption{Effect of moment estimation strategy used for calculating CP (``cp\_calc\_method") on significance level and power for the sample-efficient mode of TAD-SIE with $1-\beta_{target}=80\%$ and $\alpha_{target}=5\%$.}
    \label{fig:calc_cp_correct_se}
\end{figure}

\begin{figure}[hbt!]
\makebox[\textwidth][c]{\includegraphics[width=1\textwidth]{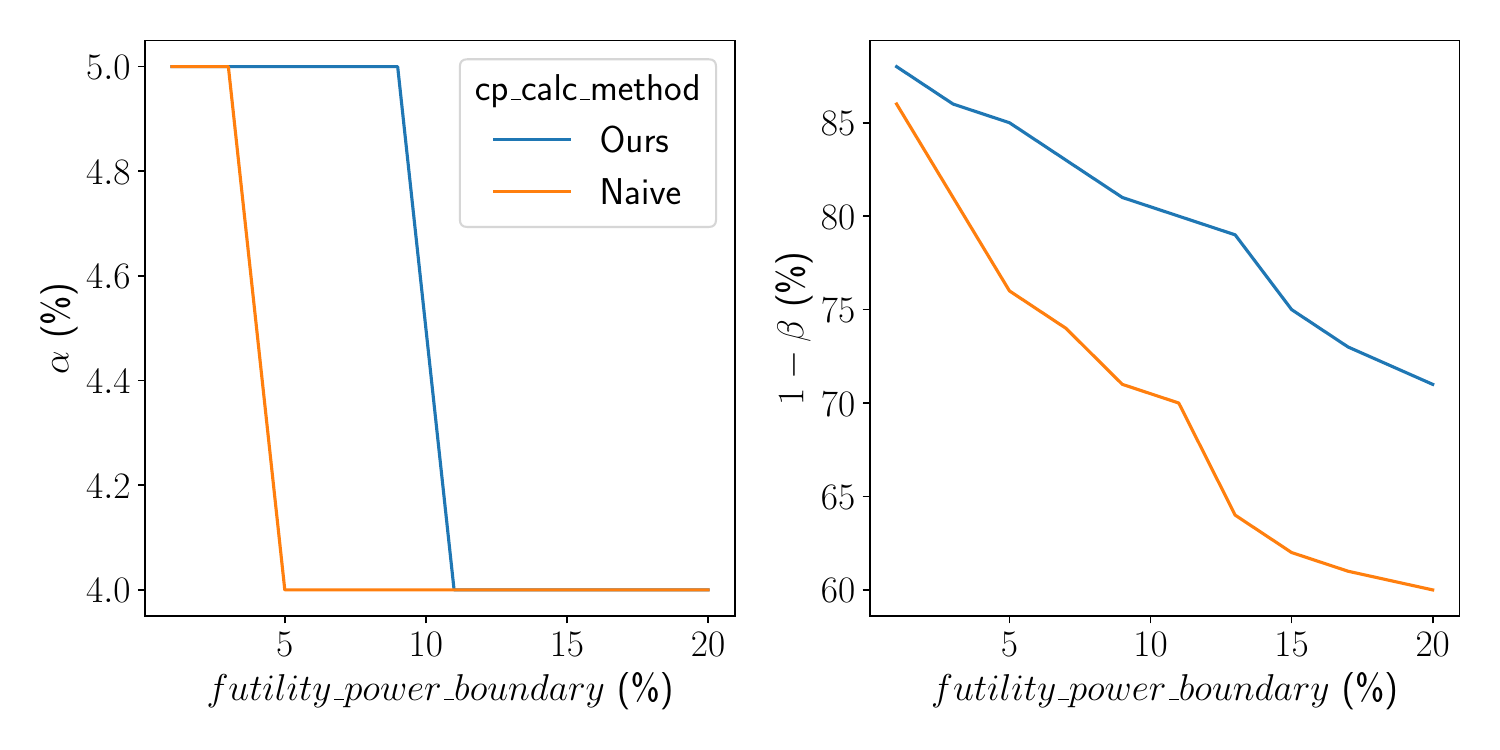}}
    \caption{Effect of moment estimation strategy used for calculating CP (``cp\_calc\_method") on significance level and power for the time-efficient mode of TAD-SIE with $1-\beta_{target}=80\%$ and $\alpha_{target}=5\%$.}
    \label{fig:calc_cp_correct_te}
\end{figure}

\subsubsection{trend-adaptive algorithm} 

Next, we show that our trend-adaptive algorithm is essential to TAD-SIE by swapping it with a standard trend-adaptive
algorithm \cite{chen2004increasing}, which we refer to as Standard-TAD+SIE; this ablated version of the algorithm is
similar to the baseline given by Standard-TAD but uses the proposed moment estimation procedure for estimating moments
and uses SECRETS for hypothesis testing. As opposed to TAD-SIE, Standard-TAD+SIE yields only one solution, i.e., there 
is no sample-efficient or time-efficient mode since the standard trend-adaptive algorithm simply increases the sample size 
based on the sample size formula without introducing a hyperparameter that scales it. 

Results for the ablation when the target power is 80\% and target significance level is 5\% are reported in
Table~\ref{tab:tad_ablation}. While all methods have significance levels close to 5\%, TAD-SIE-SE and TAD-SIE-TE
get powers close to 80\% while Standard-TAD+SIE only results in 62\% power. Power drops substantially under
Standard-TAD+SIE because the standard trend-adaptive algorithm imposes a stringent condition based on CP that
precludes trials from increasing their sample sizes after the initial sample size calculation in order to
control the significance level. In particular, under Standard-TAD+SIE, only 4\% of trials increase the sample size when $H_0$ holds, thereby controlling significance level; however, this comes at the cost of power since only 13\% of trials increase the sample size when $H_1$ holds. In contrast, our trend-adaptive design allows trials to increase sample sizes without any condition while monitoring futility, thereby allowing TAD-SIE to reach target power and control 
the significance level. In particular, under TAD-SIE-SE, when $H_0$ holds, 71\% of trials perform at least one sample size increase but 96\% of trials terminate by futility; when $H_1$ holds, 73\% of trials perform at least one sample size increase and only 19\% of trials terminate by futility. Similar trends manifest under TAD-SIE-TE, although the fraction of trials 
that perform at least one increase declines since TAD-SIE-TE uses larger step sizes. These results establish the effectiveness of our trend-adaptive algorithm in reaching the target operating point. 

\begin{table*}[hbt!]
\centering
\caption{TAD-SIE (sample-efficient and time-efficient modes) compared against an ablated version of TAD-SIE that swaps the proposed trend-adaptive algorithm for a standard one \cite{chen2004increasing}, with $1-\beta_{target}=80\%$ and $\alpha_{target}=5\%$. For each method, we report significance level and power, percent of trials that increase the sample size at least once when $H_0$ or $H_1$ holds, and percent of trials that terminate by futility when $H_0$ or $H_1$ holds, given by the columns, respectively. For Standard-TAD+SIE, the entry under ``\% terminating by futility" is n/a since this algorithm does not use futility stopping.}
\label{tab:tad_ablation} 
\resizebox{\columnwidth}{!}{%
\begin{tabular}{|M{0.25\linewidth} M{0.4\linewidth} M{0.25\linewidth} M{0.25\linewidth}|}
\hline
\textbf{Method} & $\boldsymbol{\alpha \ / \ 1-\beta}$ \textbf{(\%)} & \textbf{\% performing increase} ($\boldsymbol{H_{0}}/\boldsymbol{H_{1}}$) & \textbf{\% futility stopping} ($\boldsymbol{H_{0}}/\boldsymbol{H_{1}}$)  \\ \hline 
TAD-SIE-SE & 4/81 & 71/73 & 96/19 \\ \hline  
TAD-SIE-TE & 4/80 & 27/42 & 96/20 \\ \hline 
Standard-TAD+SIE & 6/62 & 4/13 & n/a \\ \hline
\end{tabular}%
}
\end{table*}   

%
%

\subsubsection{SECRETS vs. Standard Hypothesis Testing}

SECRETS is vital to the performance gains under TAD-SIE. To demonstrate this, we appropriately modify TAD-SIE to
make it suitable for standard hypothesis testing. Specifically, we modify \emph{estimate\_moments} to
calculate the ATE and variances of the outcome over the control and treatment groups based on the original RCT
data; these equations are captured in Eq.~(\ref{eq:moments_two_sample}), where $o_{control}$ and $o_{treat}$ are
vectors that contain primary outcomes per subject in the control and treatment groups, respectively. We also
modify \emph{get\_step\_size} so that it takes in the variance for the control and treatment groups and uses
the appropriate sample size formula, i.e., Eq.~(\ref{eq:sample_size_two_sample}), in 
line~\ref{agss:get_target} of Alg.~\ref{alg:gss}; we do not divide by two since
Eq.~(\ref{eq:sample_size_two_sample}) returns the sample size per arm. We modify \emph{check\_for\_futility} similarly so that it takes in the two variance terms and uses the appropriate test statistic formula, i.e., 
Eq.~(\ref{eq:test_stat_two_sample}) \cite{rosner2015fundamentals}, in line~\ref{cff:z} of Alg.~\ref{cff:cp}. Finally, we switch the hypothesis testing procedure from SECRETS to the two-sample \emph{t}-test for independent samples with unequal variances \cite{rosner2015fundamentals}.

\begin{align}
\label{eq:moments_two_sample}
&\delta =\text{Mean}(o_{treat})-\text{Mean}(o_{control}) \\ 
&\sigma_{ctrl}^2 =\text{Var}(o_{control}) \nonumber \\ 
&\sigma_{treat}^2 =\text{Var}(o_{treat}) \nonumber  
\end{align}  

\begin{equation}
\label{eq:test_stat_two_sample}
z=\frac{\delta}{\sqrt{(\sigma_{control}^2 + \sigma_{treat}^2)/n_a}}
\end{equation} 

Results that compare TAD-SIE against the ablated version with target power of 80\% and target significance level of 5\% are 
shown in Table~\ref{tab:hyp_test}. Specifically, for TAD-SIE-SE, swapping TAD-SIE with the version designed for standard hypothesis testing reduces power from 81\% to 22\% even though significance level increases from 4\% to 5\%; similar trends manifest for TAD-SIE-TE although the drop in power is not as large since the algorithm terminates with a larger sample size (i.e., when $H_1$ holds, the median final arm size is 75 under TAD-SIE-SE and 450 under TAD-SIE-TE). TAD-SIE designed for SECRETS is superior to the ablated version because SECRETS is able to boost power over standard hypothesis testing at any arm size by simulating a cross-over trial, thereby increasing the chance of reaching target power within the trial's sample size constraints.  

\begin{table*}[hbt!]
\centering
\caption{TAD-SIE (sample-efficient and time-efficient modes) compared against a version of TAD-SIE designed for two-sample \emph{t}-testing with $1-\beta_{target}=80\%$ and $\alpha_{target}=5\%$. The version is indicated by the ``Hypothesis Testing Strategy" column. 
}
\label{tab:hyp_test} 
\resizebox{\columnwidth}{!}{%
\begin{tabular}{|M{0.25\linewidth} M{0.4\linewidth} M{0.25\linewidth} M{0.25\linewidth}|}
\hline

\textbf{Method} & \textbf{Hypothesis Testing Strategy} & $\boldsymbol{1-\beta}$ \textbf{(\%)} & $\boldsymbol{\alpha}$ \textbf{(\%)}  \\
\hline

\multirow[c]{2}{*}{TAD-SIE-SE} & SECRETS & 81 & 4   \\ 
& Standard & 22 & 5 \\  \hline 

\multirow[c]{2}{*}{TAD-SIE-TE} & SECRETS & 80 & 4   \\  
& Standard & 54 & 5\\  \hline   

\end{tabular}%
}
\end{table*}

\section{Discussion} \label{sec:discuss}

Our results show that TAD-SIE is able to reach target accuracies, given poor initial sample size estimates for 
a parallel-group RCT, by using a trend-adaptive algorithm designed for a powerful hypothesis testing strategy.
Specifically, it leverages SECRETS for hypothesis testing, which offers substantial boosts in power at any given arm
compared to standard hypothesis testing; hence, integrating this hypothesis testing algorithm into an adaptive design 
makes it easier to reach target powers within a trial's constraints, especially under noisy estimates of the ATE and 
outcome variance, as demonstrated by our ablation that adapts TAD-SIE to standard hypothesis testing. However, reaching 
the target operating point under SECRETS requires a new TAD since SECRETS introduces dependencies among samples, a 
property that violates assumptions under standard TADs. Specifically, our ablations show that our procedure for estimating 
the variance properly accounts for dependencies among the samples so that sample size estimates and CP calculations for 
futility stopping are more accurate. In addition, since the standard TAD is ineffective at reaching target power, our TAD 
overcomes this limitation by removing the sample increase criterion and iteratively increasing the sample size based on 
improved sample size estimates, while controlling the significance level with futility stopping.

In addition, TAD-SIE can toggle between sample-efficient or time-efficient modes to realize the target operating point, 
thereby providing an additional degree of freedom to the designer that has not been considered before in the context of 
adaptive designs as two-stage adaptive designs are common \cite{mehta2011adaptive}. TAD-SIE implements these modes by 
decreasing or increasing \emph{sample\_size\_scale\_factor}, which determines the rate at which the sample size is 
increased at each iteration, while adopting an adequate futility stopping threshold; based on results from a sample RCT, 
we suggested default settings for these hyperparameters.

While TAD-SIE was presented in the context of a standard parallel-group RCT, i.e., a two-arm superiority trial, it is 
applicable to multi-arm studies, since they reduce to pairwise comparisons between select arms 
\cite{juszczak2019reporting}, and to studies where equivalence or noninferiority is being assessed 
\cite{friedman2015fundamentals} since these only affect the test strategy used and require appropriate changes to 
calculations of sample size, test statistic, and CP within TAD-SIE (including SECRETS) \cite{chow2002note}.   

While we have demonstrated TAD-SIE's effectiveness on a sample RCT, TAD-SIE inherits the limitations of adaptive
designs, namely that it assumes that the primary outcome is rapidly measurable, a property that does not hold for all
RCTs (e.g., RCTs involving chronic conditions) \cite{friedman2015fundamentals}. If the primary outcome takes too long 
to measure, it may not be possible to perform an adaptation, even under the time-efficient mode of TAD-SIE, since the trial gets prolonged, making it obsolete \cite{friedman2015fundamentals,wason2019keep}. On a related note, the sample-efficient mode of TAD-SIE assumes that the enrollment rate is slower than the rate at which the primary outcome is measured, a constraint 
that sometimes may not be feasible to satisfy since this can also prolong the trial, given that TAD-SIE-SE can require 
more than two iterations. If the enrollment rate is faster, however, TAD-SIE-SE cannot guarantee sample size reductions since by the time the algorithm terminates with the true final sample size, the number of recruited participants could exceed that value \cite{mehta2011adaptive}. In future work, we plan to modify TAD-SIE so that it can satisfy constraints on response time and enrollment rates by having it estimate the primary outcome from correlated, rapidly measurable outcomes rather than directly measuring its value \cite{wason2019keep}, given that trials collect data across other response variables \cite{friedman2015fundamentals,sertkaya2016key}. We will look into schemes for fitting such an estimator from data obtained from the internal pilot study and accrued throughout the trial.

\section{Conclusion} \label{sec:con}

In conclusion, we presented TAD-SIE, a novel framework that implements a parallel-group RCT to reach a target power and 
target significance level in the absence of reliable sample size estimates obtained from study planning. To do this, 
TAD-SIE implements a trend-adaptive algorithm designed for a state-of-the-art hypothesis testing strategy and one that 
addresses limitations of a standard TAD in order to increase the chance of reaching target power within a trial's 
constraints. Since TAD-SIE is a sequential algorithm, we suggest hyperparameters to implement sample-efficient or 
time-efficient modes to accommodate different resource constraints. Evaluated on a real-world Phase-3 clinical 
parallel-group RCT under a common design (i.e., two-arm superiority trial with an equal number of subjects per arm), 
TAD-SIE reaches typical target operating points of 80\% or 90\% power and 5\% significance level, unlike standard 
approaches. In particular, for a target power of 80\% and 5\% significance level, the sample-efficient mode of TAD-SIE 
requires a median of 300-400 subjects per arm and a median of 4-5 iterations while the time-efficient mode of TAD-SIE 
requires a median of 500-700 subjects per arm and a median of 2 iterations, showing that TAD-SIE can be used to 
effectively trade off resources (similar trends were observed for a target power of 90\% and target significance level 
of 5\%). Our results demonstrate that TAD-SIE can be used to increase the success rate for Phase-3 clinical trials, 
thereby increasing efficiency of the drug development cycle and saving hundreds of millions of USD per drug. However, 
since TAD-SIE is iterative, it is only practical for RCTs where the time to measure the primary outcome is small. Our 
future work will extend TAD-SIE to settings where these times are longer.    

\section*{Acknowledgment}  

This work was supported by NSF under Grant No. CNS-1907381 and was performed using resources from Princeton Research 
Computing. 

This research is based on data from NINDS obtained from its Archived Clinical Research Dataset website. The CHAMP dataset 
was obtained from the Childhood and Adolescent Migraine Prevention Study, conducted under principal investigators (PIs) 
Drs. Powers, Hershey, and Coffey, under Grant \#1U01NS076788-01.        


\vskip 0.2in
\bibliography{ref}

\begin{thebibliography}{10}
\providecommand{\url}[1]{#1}
\csname url@samestyle\endcsname
\providecommand{\newblock}{\relax}
\providecommand{\bibinfo}[2]{#2}
\providecommand{\BIBentrySTDinterwordspacing}{\spaceskip=0pt\relax}
\providecommand{\BIBentryALTinterwordstretchfactor}{4}
\providecommand{\BIBentryALTinterwordspacing}{\spaceskip=\fontdimen2\font plus
\BIBentryALTinterwordstretchfactor\fontdimen3\font minus
  \fontdimen4\font\relax}
\providecommand{\BIBforeignlanguage}[2]{{%
\expandafter\ifx\csname l@#1\endcsname\relax
\typeout{** WARNING: IEEEtran.bst: No hyphenation pattern has been}%
\typeout{** loaded for the language `#1'. Using the pattern for}%
\typeout{** the default language instead.}%
\else
\language=\csname l@#1\endcsname
\fi
#2}}
\providecommand{\BIBdecl}{\relax}
\BIBdecl

\bibitem{friedman2015fundamentals}
L.~M. Friedman, C.~D. Furberg, D.~L. DeMets, D.~M. Reboussin, and C.~B.
  Granger, \emph{Fundamentals of Clinical Trials}.\hskip 1em plus 0.5em minus
  0.4em\relax Springer, 2015.

\bibitem{stanley2007design}
K.~Stanley, ``Design of randomized controlled trials,'' \emph{Circulation},
  vol. 115, no.~9, pp. 1164--1169, 2007.

\bibitem{wong2019estimation}
C.~H. Wong, K.~W. Siah, and A.~W. Lo, ``Estimation of clinical trial success
  rates and related parameters,'' \emph{Biostatistics}, vol.~20, no.~2, pp.
  273--286, 2019.

\bibitem{harrer2019artificial}
S.~Harrer, P.~Shah, B.~Antony, and J.~Hu, ``Artificial intelligence for
  clinical trial design,'' \emph{Trends in Pharmacological Sciences}, vol.~40,
  no.~8, pp. 577--591, 2019.

\bibitem{hwang2016failure}
T.~J. Hwang, D.~Carpenter, J.~C. Lauffenburger, B.~Wang, J.~M. Franklin, and
  A.~S. Kesselheim, ``Failure of investigational drugs in late-stage clinical
  development and publication of trial results,'' \emph{JAMA Internal
  Medicine}, vol. 176, no.~12, pp. 1826--1833, 2016.

\bibitem{lala2023secrets}
S.~Lala and N.~K. Jha, ``{SECRETS}: Subject-efficient clinical randomized
  controlled trials using synthetic intervention,'' \emph{arXiv preprint
  arXiv:2305.05078}, 2023.

\bibitem{yao2021survey}
L.~Yao, Z.~Chu, S.~Li, Y.~Li, J.~Gao, and A.~Zhang, ``A survey on causal
  inference,'' \emph{ACM Transactions on Knowledge Discovery from Data},
  vol.~15, no.~5, pp. 1--46, 2021.

\bibitem{rosner2015fundamentals}
B.~Rosner, \emph{Fundamentals of Biostatistics}.\hskip 1em plus 0.5em minus
  0.4em\relax Cengage Learning, 2015.

\bibitem{hulley2013designing}
S.~B. Hulley, S.~R. Cummings, W.~S. Browner, D.~G. Grady, and T.~B. Newman,
  \emph{Designing Clinical Research}.\hskip 1em plus 0.5em minus 0.4em\relax
  Lippincott Williams \& Wilkins, 2013.

\bibitem{teare2014sample}
M.~D. Teare, M.~Dimairo, N.~Shephard, A.~Hayman, A.~Whitehead, and S.~J.
  Walters, ``Sample size requirements to estimate key design parameters from
  external pilot randomised controlled trials: {A} simulation study,''
  \emph{Trials}, vol.~15, pp. 1--13, 2014.

\bibitem{gsd_boundary}
\BIBentryALTinterwordspacing
``9.5 - {F}requentist methods: {O'Brien-Fleming}, {P}ocock,
  {H}aybittle-{P}eto.'' [Online]. Available:
  \url{https://online.stat.psu.edu/stat509/lesson/9/9.5}
\BIBentrySTDinterwordspacing

\bibitem{mehta2011adaptive}
C.~R. Mehta and S.~J. Pocock, ``Adaptive increase in sample size when interim
  results are promising: A practical guide with examples,'' \emph{Statistics in
  Medicine}, vol.~30, no.~28, pp. 3267--3284, 2011.

\bibitem{snapinn2006assessment}
S.~Snapinn, M.-G. Chen, Q.~Jiang, and T.~Koutsoukos, ``Assessment of futility
  in clinical trials,'' \emph{Pharmaceutical Statistics: The Journal of Applied
  Statistics in the Pharmaceutical Industry}, vol.~5, no.~4, pp. 273--281,
  2006.

\bibitem{lan1988b}
K.~G. Lan and J.~Wittes, ``The b-value: {A} tool for monitoring data,''
  \emph{Biometrics}, pp. 579--585, 1988.

\bibitem{chen2004increasing}
Y.~J. Chen, D.~L. DeMets, and K.~Gordon~Lan, ``Increasing the sample size when
  the unblinded interim result is promising,'' \emph{Statistics in Medicine},
  vol.~23, no.~7, pp. 1023--1038, 2004.

\bibitem{agarwal2023synthetic}
A.~Agarwal, D.~Shah, and D.~Shen, ``Synthetic {A/B} testing using synthetic
  interventions,'' 2023, arXiv:2006.07691v5.

\bibitem{ross2011fundamentals}
N.~Ross, ``Fundamentals of {S}tein’s method,'' \emph{Probability Surveys},
  vol.~8, pp. 210--293, 2011.

\bibitem{blackston2019comparison}
J.~W. Blackston, A.~G. Chapple, J.~M. McGree, S.~McDonald, and J.~Nikles,
  ``Comparison of aggregated n-of-1 trials with parallel and crossover
  randomized controlled trials using simulation studies,'' \emph{Healthcare},
  vol.~7, no.~4, p. 137, 2019.

\bibitem{powers2017trial}
S.~W. Powers, C.~S. Coffey, L.~A. Chamberlin, D.~J. Ecklund, E.~A. Klingner,
  J.~W. Yankey \emph{et~al.}, ``Trial of amitriptyline, topiramate, and placebo
  for pediatric migraine,'' \emph{New England Journal of Medicine}, vol. 376,
  no.~2, pp. 115--124, 2017.

\bibitem{champ_stats}
\BIBentryALTinterwordspacing
``The childhood and adolescent migraine prevention study ({CHAMP}).'' [Online].
  Available: \url{https://clinicaltrials.gov/ct2/show/results/NCT01581281}
\BIBentrySTDinterwordspacing

\bibitem{ninds_archive}
\BIBentryALTinterwordspacing
``Archived clinical research datasets.'' [Online]. Available:
  \url{https://www.ninds.nih.gov/current-research/research-funded-ninds/clinical-research/archived-clinical-research-datasets}
\BIBentrySTDinterwordspacing

\bibitem{birkett1994internal}
M.~A. Birkett and S.~J. Day, ``Internal pilot studies for estimating sample
  size,'' \emph{Statistics in Medicine}, vol.~13, no. 23-24, pp. 2455--2463,
  1994.

\bibitem{fda_phases}
\BIBentryALTinterwordspacing
``Step 3: {C}linical research.'' [Online]. Available:
  \url{https://www.fda.gov/patients/drug-development-process/step-3-clinical-research}
\BIBentrySTDinterwordspacing

\bibitem{juszczak2019reporting}
E.~Juszczak, D.~G. Altman, S.~Hopewell, and K.~Schulz, ``Reporting of multi-arm
  parallel-group randomized trials: {E}xtension of the consort 2010
  statement,'' \emph{JAMA}, vol. 321, no.~16, pp. 1610--1620, 2019.

\bibitem{chow2002note}
S.-C. Chow, J.~Shao, and H.~Wang, ``A note on sample size calculation for mean
  comparisons based on noncentral t-statistics,'' \emph{Journal of
  Biopharmaceutical Statistics}, vol.~12, no.~4, pp. 441--456, 2002.

\bibitem{wason2019keep}
J.~M. Wason, P.~Brocklehurst, and C.~Yap, ``When to keep it simple--adaptive
  designs are not always useful,'' \emph{BMC Medicine}, vol.~17, pp. 1--7,
  2019.

\bibitem{sertkaya2016key}
A.~Sertkaya, H.-H. Wong, A.~Jessup, and T.~Beleche, ``Key cost drivers of
  pharmaceutical clinical trials in the united states,'' \emph{Clinical
  Trials}, vol.~13, no.~2, pp. 117--126, 2016.

\end{thebibliography}
\bibliographystyle{IEEEtran} 

\newtheorem{thm}{Theorem}[subsection]
\renewcommand{\thethm}{\arabic{thm}}

\section*{Appendix} \label{sec:appendix}

\begin{thm} 

The variance calculated by the procedure outlined in estimate\_moments scales with sample size $n$, i.e., $\sigma^{2}=O(n)$. 

\end{thm}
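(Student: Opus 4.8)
The plan is to unwind the \emph{estimate\_moments} procedure until $\sigma^2$ is expressed directly in terms of the bootstrapped variance of the ATE, and then to analyze how that quantity behaves as the arm size $n_{curr}$ grows. From the description of \emph{estimate\_moments} and \emph{get\_variance\_of\_outcome}, the reported variance satisfies $\sigma^2 = 2 n_{curr}\, \sigma_\delta^2$, where $\sigma_\delta^2$ is the sample variance over the $B$ values of the ATE produced by \emph{get\_variance\_of\_ATE}, and each such ATE is the mean (via \emph{get\_mean}) of the $2 n_{curr}$ pooled ITEs returned by \emph{estimate\_ITEs}. Thus everything reduces to one question: as $n_{curr}$ increases, does $\sigma_\delta^2$ shrink at the usual $\Theta(1/n_{curr})$ rate of a sample mean, or does it fail to? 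If it fails to shrink, i.e. $\sigma_\delta^2 = \Omega(1)$, then $\sigma^2 = 2 n_{curr}\, \sigma_\delta^2$ grows linearly, which is the content of the theorem.

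First I would dispatch the upper bound, which already yields the literal $O(n)$ claim: each ITE is a difference of an observed outcome and an SI reconstruction built from resampled rows of a finite data set, so all ITEs, hence every bootstrapped ATE, are uniformly bounded by a constant $M$ independent of $n_{curr}$; a bounded random variable has variance at most $M^2$, so $\sigma_\delta^2 \le M^2$ and $\sigma^2 \le 2 M^2 n_{curr} = O(n)$. To show this bound is not vacuous, I would expand $\sigma_\delta^2 = \mathrm{Var}(\bar Y) = \tfrac{1}{m^2}\big(\sum_{i}\mathrm{Var}(Y_i) + \sum_{i \ne j}\mathrm{Cov}(Y_i, Y_j)\big)$ with $m = 2 n_{curr}$. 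The diagonal sum contributes only $O(1/m)$, so the theorem rests entirely on the off-diagonal term: the average pairwise covariance of the ITEs must stay bounded away from zero. This is precisely the dependency that SECRETS introduces, since \emph{estimate\_ITEs} fits a single SI model once from the donor arm and reuses it for every target unit in that arm, so every ITE inherits a common random estimation error that does not average out.

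The main obstacle is making that last step rigorous: showing the average off-diagonal covariance, equivalently the variance of the ITE mean conditioned on the shared SI model, does not vanish as $n_{curr} \to \infty$. One route is the law of total variance, conditioning on the fitted SI weights: $\sigma_\delta^2 \ge \mathrm{Var}\!\big(\mathrm{E}[\bar Y \mid \text{SI model}]\big)$, and then arguing this conditional-mean term is $\Omega(1)$ because the common component it captures persists under resampling. A lighter route, and arguably all that the word ``can'' in the statement requires, is to exhibit one admissible configuration of the donor data for which the reused model carries non-degenerate randomness, so that the ITEs behave like $Y_i = a + \varepsilon$ for a shared $\varepsilon$ with $\mathrm{Var}(\varepsilon) = \Omega(1)$, giving $\sigma_\delta^2 = \Omega(1)$ and hence $\sigma^2 = \Theta(n)$. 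Either way, the proof should contrast this with the i.i.d. case, where $\sigma_\delta^2 = \sigma^2_{\mathrm{ITE}}/m$ and the relation $\sigma^2 = 2 n_{curr}\,\sigma_\delta^2$ would simply recover the constant $\sigma^2_{\mathrm{ITE}}$; it is exactly the breakdown of independence under SECRETS that turns an $O(1)$ quantity into an $O(n)$ one.
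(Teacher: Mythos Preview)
Your proposal is correct and follows essentially the same route as the paper: both arguments reduce to the identity $\sigma^{2}=2n_{curr}\,\sigma_{\delta}^{2}$, expand $\sigma_{\delta}^{2}=\mathrm{Var}(\bar Y)$ into diagonal variances plus off-diagonal covariances, and observe that dependence among the ITEs keeps $\sigma_{\delta}^{2}=O(1)$ rather than $\Theta(1/n)$, whence $\sigma^{2}=O(n)$, while the independent case collapses to $\Theta(1)$. You actually go a step beyond the paper by sketching why the off-diagonal contribution is genuinely $\Omega(1)$ (via the shared SI model and a law-of-total-variance or explicit-example argument); the paper's proof stops at the $O(n)$ upper bound obtained by counting the $O(n^{2})$ cross terms and contrasting with the independent case, without establishing tightness.
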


\begin{proof} \label{proof:thm_var_scale}

Let $\{X_i\}_{i=1}^{n}$ be a set of random variables that could be dependent with $\mathbb{E}[X_i]=0$, and let $\bar{X}\coloneqq\frac{1}{n}{ \sum\limits_{i=1}^{n} X_i }$. First, we establish bounds on $\sigma_{\bar{X}}^{2}$, the variance of the mean, given by 

\begin{equation} \label{eq:var_mean}
\sigma_{\bar{X}}^{2} =\text{Var}\Bigg( \frac{\sum_{i=1}^{n} X_i }{n}  \Bigg) \\ 
\end{equation}

\noindent
To do this, we establish bounds on the variance of the sum, shown in Eq.~(\ref{eq:var_sum_bound}). 

\begin{align} \label{eq:var_sum_bound}
\text{Var}\Bigg(\sum\limits_{i=1}^{n} X_i \Bigg) &=\mathbb{E}\Bigg[\Bigg(\sum\limits_{i=1}^{n}X_i\Bigg)^2\Bigg] \\
&= \sum\limits_{(i,j) \in [1,n] \times [1,n]} \mathbb{E}[X_iX_j] \nonumber \\
&= \begin{cases}
\theta(n), & \text{if}\ \{X_i\}_{i=1}^{n} \ \text{are independent} \nonumber \\
O(n^2), & \text{otherwise}
\end{cases}
\end{align}

The bound when $\{X_i\}_{i=1}^n$ are independent follows because there are $n$ terms in the summand since the cross-terms reduce to 0 given $\mathbb{E}[X_i]=0$. When $\{X_i\}_{i=1}^n$ are not independent, there can be at most $\frac{n(n-1)}{2}$ additional non-zero cross-terms in the summand, hence the sum is bounded above by $O(n^2)$.

Plugging these bounds into Eq.~(\ref{eq:var_mean}) establishes bounds on $\sigma_{\bar{X}}^{2}$, as shown in 
Eq.~(\ref{eq:var_mean_bound}).

\begin{align} \label{eq:var_mean_bound}
\sigma_{\bar{X}}^{2} &=\text{Var}\Bigg( \frac{\sum_{i=1}^{n} X_i }{n}  \Bigg) \\
&=\begin{cases}
      \theta(1/n), & \text{if}\ \{X_i\}_{i=1}^{n} \ \text{are independent} \nonumber \\
      O(1), & \text{otherwise}
    \end{cases}  
\end{align}  

Now consider another set of random variables $\{\tilde{X}_i\}_{i=1}^{n}$ that are i.i.d. with variance
$\sigma_{\tilde{X}}^2$, for which the following relationship holds: 

\begin{equation} \label{eq:var_iid_var_mean}
\text{Var}\Bigg( \frac{\sum_{i=1}^{n} \tilde{X}_i }{n}  \Bigg)=\sigma_{\bar{X}}^{2} \\
\end{equation}

Then, given the relationship between the variance of the mean and variance of i.i.d. samples from 
Eq.~(\ref{eq:var_mean_sample}), we plug in the bounds from Eq.~(\ref{eq:var_mean_bound}) to establish bounds on
$\sigma_{\tilde{X}}^{2}$, as shown in Eq.~(\ref{eq:var_iid_bound}):

\begin{align} \label{eq:var_iid_bound}
\sigma_{\tilde{X}}^{2}&=n\sigma_{\bar{X}}^2 \\
&=\begin{cases}
\theta(1), & \text{if}\ \{X_i\}_{i=1}^{n} \ \text{are independent} \nonumber \\
O(n), & \text{otherwise}
\end{cases}
\end{align}

\end{proof}
\end{document}